\definecolor{qqwuqq}{rgb}{0.,0.39215686274509803,0.}
\definecolor{sqsqsq}{rgb}{0.12549019607843137,0.12549019607843137,0.12549019607843137}
\definecolor{qqqqff}{rgb}{0.,0.,1.}
\definecolor{cqcqcq}{rgb}{0.7529411764705882,0.7529411764705882,0.7529411764705882}
\definecolor{eqeqeq}{rgb}{0.8784313725490196,0.8784313725490196,0.8784313725490196}
\newtheorem{theorem}{Theorem}[section]
\newtheorem{corollary}[theorem]{Corollary}
\newtheorem{lemma}[theorem]{Lemma}
\newtheorem{proposition}[theorem]{Proposition}
\theoremstyle{definition}
\theoremstyle{remark}
\newtheorem{remark}[theorem]{Remark}
\newtheorem{example}[theorem]{Example}
\DeclareMathOperator{\diag}{diag}
\DeclareMathOperator{\rank}{rank}
\DeclareMathOperator{\tr}{trace}
\title{On the Design of Multi-Dimensional Compactly Supported Parseval Framelets with Directional Characteristics}
\author{N. Karantzas$^1$, N. Atreas$^2$, M.Papadakis$^1$, T. Stavropoulos$^3$}
\date{{\small 1. Dept. of Mathematics, University of Houston, USA} \\
{\small2. Dept. of Electrical and Computer Engineering,\\ Aristotle University of Thessaloniki, GR}\\
{\small 3. Dept. of Mathematics, National and Kapodistrian University of Athens, GR}}
\begin{document}
\maketitle
\abstract 
\noindent In this paper, we propose a new method for the construction of multi-dimensional, wavelet-like families of affine frames, commonly referred to as framelets, with specific directional characteristics, small and compact support in space, directional vanishing moments (DVM), and axial symmetries or anti-symmetries. The framelets we construct arise from readily available refinable functions. The filters defining these framelets have few non-zero coefficients, custom-selected orientations and can act as finite-difference operators. The article has been accepted for publication in Linear algebra and its applications and the corresponding DOI is \url{https://doi.org/10.1016/j.laa.2019.07.028}.

\medskip

\noindent {\bf Keywords:} Compactly supported multi-wavelets, Refinable functions, Directional atoms, Parabolic molecules, Directional molecules, Extension Principles, Compactly supported framelets.

\medskip

\noindent 2000 MSC: 42C15, 42C40

\section{Introduction}\label{s1}
Multidimensional sparse representations occupy a significant part of the literature on multiscale decompositions. The interest in such representations arises from their ability, at least in theory, to detect singularities along curves of surfaces with some smoothness. However, it is not the first time that such representations are developed for the analysis of 2D and 3D images. From the early years of filter banks and wavelets, image decompositions for compression and analysis have been on the focus of many researchers (e.g., \cite{Daubechies_1992_4532,Vetterli_1995_4637}). The vast majority of those designs was based on tensor product constructs of one-dimensional multiscale decompositions.

However, even in the early 90s, it was realized that such constructs (those mostly in use at the time were real-valued) do not seem to give optimal results, especially on curved boundaries \cite{Kovacevic_1992_4518,Vetterli_1995_4637}. This motivated several researchers to explore non-separable (non-tensor product) designs, e.g., \cite{Ayache_2001_6383,Belogay_1999_58} or other dilation operators, e.g., \cite{Kovacevic_1992_4518} that later led to the quite popular design of beamlets, curvelets and shearlets (\cite{Candes2006,Candes2005,Demanet_2003_4534,Candes_1999_6387,Candes_2004_12186,Labate2005}). The starting point is a refinable function $\phi$; wavelets are then derived using the classical equations involving the low and high-pass filters, generalized as Extension Principles first by Ron and Shen \cite{Ron_1997_4616,Ron_1997_4615}. Stability, compact support, smoothness and vanishing moment orders of the resulting wavelets are derived from properties of the generating refinable function, e.g., \cite{Daubechies_1992_4532}.  

In this paper, we attempt to propose an alternative view on this old problem. Our goal is to combine anisotropy with an  abundance of orientations to mimic those of discrete  curvelets and shearlets (see \cite{Grohs2014} for a comprehensive treatment of the directionality for discrete parabolic molecules). Specifically, we propose a new method to design frame wavelets which combine the advantages of  compactly supported wavelets, namely small support and vanishing moments, but also the directionality and orientability of curvelets and shearlets. One of the key novelties of this work is that we trade classical filter design, formulated as a problem of solving systems of trigonometric polynomial equations in the frequency domain for a much more computationally efficient method based on Singular Value Decomposition (SVD) (Theorems \ref{t2.5} and \ref{t3.2}). This new method is simple and is the key contribution of this work.

Our starting point is a refinable function with compact support, or in other words, a function $\phi\in L_{2}(\mathbb{R}^{s})$ satisfying the following conditions:

\begin{itemize}
\item The Fourier transform $\widehat{\phi}$ is continuous in a neighborhood of the origin and \[\widehat{\phi}(0)=1.\]
\item The $\mathbb{Z}^{s}$-periodic function $\Phi=\sum_{k\in\mathbb{Z}^{s}}|\widehat{\phi}(\cdot+k)|^2$ is in $L_{\infty}(\mathbb{T}^{s})$, the space of all measurable essentially bounded functions on $\mathbb{T}^{s}$. The {\it spectrum} of $\Phi$ is denoted by $\sigma_{\phi}=\{\gamma\in \mathbb{T}^{s}:\Phi(\gamma)\neq 0\}$.
\item $\phi$ is a {\it refinable function}, i.e., $\widehat{\phi}(2\gamma) = H_{0}(\gamma)\widehat{\phi}(\gamma)$ for almost every $\gamma$ and for some $\mathbb{Z}^{s}$-periodic function $H_{0}\in L_{2}(\mathbb{T}^{s})$ called a {\it low-pass filter} or a {\it refinement mask}. 
\end{itemize}

\noindent Next, given a finite natural number $v$, we also consider a vector of refinable functions $\Psi =(\psi_{i})_{i=1}^{v}\in L_{2}^{1 \times v}$ called a {\it multi-wavelet} satisfying $\widehat{\Psi}(2\gamma)=H_{1}(\gamma)\widehat{\phi}(\gamma)$ for a.e. $\gamma \in \mathbb{R}^{s}$ and for another $\mathbb{Z}^{s}$-periodic vector-valued function $H_{1}\in L_{2}^{v \times 1}(\mathbb{T}^{s})$ called a {\it high-pass filter} or a {\it wavelet mask}. We define the dilation and translation operators on $L_{2}(\mathbb{R}^{s})$ by $D_{2}f=2^{s/2}f(2\cdot)$ and $\tau_{k}f = f(
\cdot-k)$, $k\in\mathbb{Z}^{s}$, respectively. For the above selection of the vector $\Psi$ we define its corresponding {\it homogeneous wavelet family} or {\it affine family} $X_{\Psi}$ by 
\[X_{\Psi}=\left\{\psi_{i,j,k}=D_{2}^{j}\tau_{k}\psi_{i}:j\in\mathbb{Z}, k\in \mathbb{Z}^{s}, i=1,...,v\right\}.\]
Additionally, for any $j_{0}\in\mathbb{Z}$, we define the {\it non-homogeneous wavelet family} $X_{\phi,\Psi}^{(j_{0})}$ by
\[X_{\phi,\Psi}^{(j_{0})}=\left\{D_{2}^{j}\tau_{k}\psi_{i}: j\geq j_{0}, k\in\mathbb{Z}^{s}, i=1,\ldots,v\right\}\cup\left\{D_{2}^{j_{0}}\tau_{k}\phi: k\in\mathbb{Z}^{s}\right\}.\]
If there exist two positive constants $C_{1}$ and $C_{2}$, such that the inequality
\[C_{1}\Vert f\Vert_{2}^{2}\leq\sum_{j\in\mathbb{Z}}\sum_{k\in\mathbb{Z}^{s}}\sum_{i=1}^{v}|\langle f,\psi_{i,j,k}\rangle|^{2}\leq C_{2}\Vert f\Vert_{2}^{2}\] 
holds for any $f\in L_{2}$, we say $X_{\Psi}$ is an {\it affine frame} or a {\it homogeneous wavelet frame} for $L_{2}$ and the elements of $\Psi$ are often called {\it framelets}. Here, we sometimes refer to them as frame wavelets. If $C_{1}=C_{2}$, then $X_{\Psi}$ is called a {\it tight wavelet frame} and if $C_{1}=C_{2}=1$, then $X_{\Psi}$ is called a {\it Parseval wavelet frame} or {\it Parseval framelet}. Homogeneous wavelet frames have only theoretical interest. In applications we are more interested in non-homogeneous frames because they model an image decomposition into various fine scales and a coarse residual created by the integer translates of the refinable function. 

Our work is influenced by \cite{Ron_1997_4616} followed by the work of  \cite{DAUBECHIES20031,Han2012,CHUI2002224}. The \emph{Mixed Oblique Extension Principle} which characterizes the pairs of "dual" families of homogeneous and non-homogeneous frames was generalized by \cite{ATREAS201451,Atreas2016} and broadens the applicability of the \emph{Unitary Extension Principle}. Here we focus on UEP, but we believe that our methods can be extended for MOEP. 

Our goal is not to propose new filters and framelets, but to provide a design framework through which one can create ensembles of Parseval framelets defined by sets of high pass finite-length filters, which can be a mix of well-known filters as well as other custom-made ones. Our intent is to make those Parseval framelet ensembles suitable to capture edges, textures and surfaces of singularities with enough sensitivity in preselected orientations. Additionally, the use of compact support promotes sparsity, which is important for many  applications.  In that regard, our gold standard is the  sparsity asymptotics of continuous curvelets and shearlets, e.g., \cite{Candes_2004_12186,Kutyniok_2009_12187}. Both families  achieve this optimal sparsity by continuously increasing the orientation resolution with scale, something our constructs are not meant to do, because they form discrete frames. However, the small compact support of our framelets in space gives them an advantage that curvelets and shearlets lack, because those are compactly supported in frequency, with the notable exception of the compactly supported shearlets developed in \cite{Kittipoom2012}.  Those form only approximately homogeneous frames and their filter length is a multiple of that of our filters.

Looking back in the design of affine wavelets in multidimensions, the vast majority of them are orthonormal or Riesz wavelets defined as tensor products of one-dimensional multiresolution analysis wavelets. Tensor product constructs tend to favor horizontal or vertical image characteristics and even introduce directional filtering variability depending on orientation. This fact was recognized by Kovacevic and Vetterli \cite{Kovacevic_1992_4518}, who attempt to construct the first finite length filters for non-tensor product filter banks.  Notably, different are the non-tensor product constructs of 
\cite{Belogay_1999_58, Ayache_1997_29, ehler05, SANANTOLIN2013201, HAN200343, Han_Jiang_Shen_Zhuang, HAN1997380} which start from a single, compactly supported refinable function whose integer shifts form a Riesz or an orthonormal basis (see \cite{ehler05} for an interesting multidimensional MRA, non-tensor product-design literature review). General dilation matrices and properties such as compact support, decay, smoothness, symmetry and vanishing moments are explored in depth. We remark that all these constructs produce only real-valued wavelets. A nice, alternative way which combines directionality and avoids the preferred filtering orientations of real-valued tensor products is the introduction of complex-valued wavelets and frames  pioneered by Kingsbury \cite{kingsbury_1999_394, Selesnick_2000_4620} and more recently \cite{Han_Zhao, Han_Mo_Zhao}, which also attempt to reproduce the anisotropy of parabolic molecules.

The construction of refinable functions with stable integer shifts is all but an easy task, as the work of Cabrelli \emph{et. al.} \cite{Cabrelli_2000_4507} demonstrates. Therefore, it is quite easier to resort to plain refinable functions whose integer shifts form a Bessel family. In this manuscript, we fully adopt this position which breaks away from the MRA-orthodoxy. As Ron and Shen demonstrated \cite{Ron_1997_4616}, this can be done with the so-called Extension Principles with added benefits, the combination of small filter support with symmetry or antisymmetry. 

An entirely different approach was proposed in \cite{Adelson_1987_3,Simoncelli_1995_606} where a filter-bank precursor of directional atoms was proposed, the steerable pyramids, aiming to define rotationally covariant multiscale transforms. In theory, rotational covariance can be realized by continuous directional transforms such as the Curvelet and Shearlet transforms. For discrete transforms this is not always obviously true or even realizable. Nonetheless, some rotational covariance can be achieved also by directional atoms as in \cite{Candes2006,Candes2005,Candes_1999_4514,Guo_2007_12169}. In this context, the rotational covariance of the representation is important because it makes feature extraction resistant to misclassification of structures due to rotations (e.g., \cite{Papadakis_2009_4602}). With shearlets, rotational covariance is different because different orientations are implemented by powers of the shearing matrices and not by rotations. Results in \cite{Grohs2014} may help elucidate this fact.
At any rate, if frame atoms are directional and orientable (e.g \cite{Candes2006,Candes2005,Candes_1999_4514,Demanet_2003_4534,Candes_1999_6387,Candes_2004_12186,Adelson_1987_3,Labate2005}), then rotational covariance is well-approximated because the induced data transforms can be thought of as good approximations of their continuous counterparts. 

More recently, a very interesting "projection method" has been proposed by B. Han to define framelets with small supports in various orientations \cite{BinHan2017}. We reproduce the main results of \cite{BinHan2017} in Corollary \ref{c2.6}. The difficulty to construct orientable frame atoms with small spatial support motivated us to seek an alternative way to construct multi-scale framelets or, more generally, atoms with this kind of support in space, oriented to have targeted filtering selectivity along a single direction selected by us from a set of several, pre-determined orientations. We can increase the number of those orientations by enlarging the spatial support of the generating refinable function. This construction method as well as the ability to keep the filters short in length are the main contribution of this paper. Furthermore, we can make filter orientation selectivity razor sharp by increasing the support of the refinable function while retaining the remaining desirable properties of the filters. 

Our main objective, the framelet construction method with respect to isotropic dyadic dilations we introduce here, is based on Theorem \ref{t2.5} which bears no similarity with classical wavelet constructions. The refinable functions we use are tensor products of  one-dimensional spline functions, which endows $\Psi$ with axial symmetries, sufficient smoothness and compact support. We are bound to to use refinable functions whose low pass filter coefficients are positive. Surprisingly enough, we show in Section \ref{s4} that the only significance of the choice of the refinable function is limited to the number of its low pass filter coefficients. This is the main reason why we are not interested in expanding our refinable function universe beyond tensor products of $B$-splines.  The essence of our design approach is that framelets $\Psi$ are derived by any high pass filter $H$, as long as $H(0)=0$ and the support of $H$ is contained in the support of the low pass filter (Section \ref{s4}). Of course, there is an associated cost for this procedure because it is rather unlikely that we can construct sets of Parseval Framelets exclusively containing the high pass filters $H$ of our choice. The multi-wavelet $\Psi$ will likely contain other framelets introduced by the process Theorem \ref{t2.5} prescribes, but as we show in Theorem \ref{t3.2}, these auxiliary elements of $\Psi$ may end up having negligible contributions in image reconstructions. 

The framelets we construct have similar properties with parabolic molecules \cite{Grohs2014}, but unlike the latter, the number of their orientations is fixed for all scales. The orientation of parabolic molecules is defined in the frequency domain. This is not suitable for us, since our framelets have compact support in space and are not $C^{\infty}$. In fact, they are less smoother. Directional filter banks, as well as atoms with higher order directional vanishing moments were studied in \cite{Adelson_1987_3,Simoncelli_1995_606,MinhDo05,MinhDo3,MinhDo07,MinhDo4}. All of them are constructed in the frequency domain. One of our novelties is the adaptation of these concepts in the spatial domain. We also provide a characterization of the Directional Vanishing Moment (DVM) orders of wavelets and an algorithmic construction to generate wavelets with up to $N-1$ DVMs. Moreover, we can customize our DVMs to be directed toward a certain orientation which does not have to coincide with the orientation of its wavelet. This helps to increase local sensitivity to wavefronts with the same orientation. Although directionality is a frequently used term in this article, we do not attempt to define it rigorously. In fact, a careful examination of the literature reveals that other authors, who use the term, avoid to do so. We invoke directionality in a descriptive manner in the sense that such directional filters or framelets have pronounced anisotropies in certain orientations, but may also have directional vanishing moments not necessarily aligned with their pronounced orientation or its normal. 

This manuscript is divided in three main sections. In Section \ref{s2}, we begin our discussion with the equations of the UEP, which we use to derive a linear algebra method which transforms the design problem of framelets arising from a refinable function to a problem of designing Parseval frames in finite-dimensional spaces. In Section \ref{s3}, we develop an algorithm which allows to custom-select the orientation and other properties of the filters defining these Parseval framelets in order to achieve high spatial orientation of the resulting high pass filters. Finally, in Section \ref{s4} we show how to include high pass filters of our choice in the high pass filter set defining $\Psi$ and present several typical examples of the filter design strategies we propose based on the methods we develop in the preceding two sections.

\section{The geometry of the proposed construction}\label{s2}

The starting point for our method is that $X_{\Psi}$ is a Parseval framelet for $L_{2}$ if and only if there exists a complex-valued vector function $H_{1}\in L_{2}^{v\times 1}(\mathbb{T}^{s})$, $v>0$, satisfying 
\begin{equation}
\label{1}
\overline{H_{0}(\gamma+q)}H_{0}(\gamma)+H_{1}^{*}(\gamma+q)H_{1}(\gamma)=\delta_{0,q}
\end{equation}
for all $q\in\{0,1/2\}^{s}$ and for almost every $\gamma\in\mathbb{T}^{s}$. Equations \eqref{1}, first presented in \cite{Ron_1997_4616}, are called the Unitary Extension Principle, according to which if the first row of the \emph{modulation matrix}
\[\begin{pmatrix}
H_{0}(\gamma)&H_{1,1}(\gamma)&\cdots&H_{1,v}(\gamma)\\
H_{0}(\gamma+q_{2})&H_{1,1}(\gamma+q_{2})&\cdots&H_{1,v}(\gamma+q_{2})\\
\vdots&\vdots&\ddots&\vdots\\
H_{0}(\gamma+q_{2^{s}-1})&H_{1,1}(\gamma+q_{2^{s}-1})&\cdots&H_{1,v}(\gamma+q_{2^{s}-1})\\
\end{pmatrix}\]
satisfies
\[|H_{0}(\gamma)|^{2}+\sum_{k=1}^{v}|H_{1,k}(\gamma)|^{2}=1\]
for almost every $\gamma\in\mathbb{T}^{s}$, and if it is orthogonal to every other row, then $X_{\Psi}$ forms a Parseval wavelet frame for $L_{2}(\mathbb{R}^{s})$ associated with $\phi$. Since the modulation matrix has $2^{s}$ rows, we observe that we must have $v\geq 2^{s}-1$.

This part of our work explores a sufficient condition for solving the above system of equations, which in essence is a system of polynomial equations with a large number of degrees of freedom and therefore quite hard to solve in closed form and in a way that yields compactly supported wavelets $\psi_{i}$. In what follows, $H_{0}$ is assumed to be a trigonometric polynomial of the form 
\[H_{0}(\gamma)=\sum_{k=1}^{N}a_{n_{k}}e^{2\pi i n_{k}\cdot\gamma}\] 
for $a_{n_{k}}\in\mathbb{R}\setminus\{0\}$,  $N>1$, and $n_{k}\in J\subset\mathbb{Z}^{s}$, i.e., the exponents of the complex exponentials in the representation of such a low-pass filter are characterized by $s$-dimensional vectors with integer components. 
We also have \[H_{0}(0)=1,\] or equivalently $\sum_{k=1}^{N}a_{n_{k}}=1$. We rewrite $H_{0}$ using the factorization 
\[H_{0}=aw\] 
where $a$ is the $1\times N$ vector of coefficients
\[a=(a_{n_{k}})_{k=1}^{N}\] 
and $w\in\mathbb{C}^{N\times 1}$ is the vector-valued function of complex exponentials given by 
\[w(\gamma)=\left(e^{2\pi in_{k}\cdot\gamma}\right)_{k=1}^{N}.\] 
From now on we express the high-pass filter $H_{1}\in L_{2}^{v\times 1}(\mathbb{T}^{s})$ as 
\[H_{1}=Bw\] 
for some $B\in\mathbb{R}^{v\times N}$. Using these expressions for $H_{0}$ and $H_{1}$, we state the main problem this section addresses.

{\bf Problem [$A$]:} \emph{Let $H_{0}=aw$ be a low-pass filter as above. Given a natural number $v\geq 2^{s}-1$, we want to determine (if it exists) a real matrix $B\in\mathbb{R}^{v\times N}$ such that the $v\times 1$ vector-valued function $H_{1}=Bw$ satisfies equation \eqref{1} and so its corresponding family $X_{\Psi}$ forms a Parseval framelet for $L_{2}(\mathbb{R}^{s})$.}

Focusing on Problem [$A$], we consider $\{m_{kt}\}_{k,t=1}^{N}$ to be the elements of the $N\times N$ matrix 
\begin{equation}
\label{2}
M:=a^{T}a+B^{T}B
\end{equation}
and we notice that equation \eqref{1} can now be written as
\begin{align}
\delta_{0,q}&=w^{*}(\gamma+q)(a^{T}a+B^{T}B)w(\gamma)\nonumber\\
&=\sum_{k=1}^{N}m_{kk}e^{-2\pi in_{k}\cdot q}+\sum_{k,t=1,k\neq t}^{N}m_{kt}e^{-2\pi in_{k}\cdot q}e^{2\pi i(n_{t}-n_{k})\cdot \gamma},\label{3}\end{align}
for all $q\in\{0,1/2\}^{s}$ and for almost every $\gamma\in\mathbb{T}^{s}$. The second summand in the right hand side of equation \eqref{3} is a linear combination of not necessarily distinct exponentials. Specifically, the second term may consist of several monomials associated with the same exponential which means that uniqueness of coefficients cannot be directly assumed, unless all terms associated with the same exponential are grouped. This gives rise to a rather complex system of non-linear equations, even in the case where the number of unknown parameters is not large. Equation \eqref{3} implies that Problem [$A$] has a solution if we can find appropriate entries for the matrix $B$ (hence for $M$) such that for all $\gamma\in\mathbb{T}^{s}$ and for all $q\in\{0,1/2\}^{s}$ the following equations are satisfied:
\begin{align}
&\label{4}\sum_{k=1}^{N}m_{kk}e^{-2\pi in_{k}\cdot q}=\delta_{0,q},\\
&\label{5}\sum_{k,t=1,k\neq t}^{N}m_{kt}e^{-2\pi in_{k}\cdot q}e^{2\pi i(n_{t}-n_{k})\cdot\gamma}=0.
\end{align}
We provide insight on the analysis concerning the system of \eqref{4} and \eqref{5} in Example \ref{ex1}, but for the purpose of this work we study the case where $M$ is a diagonal matrix, or in other words, the case where $m_{kt}=0$ for $k\neq t$. The second summand in equation \eqref{3} vanishes for all $\gamma$ and so equation \eqref{5} is always satisfied. However, the hypothesis that $M$ is diagonal imposes the constraint $v\geq N-1$ as the next Lemma indicates. In other words, the number of non-zero Fourier coefficients of the low-pass filter $H_{0}$ affects the dimensionality of the high-pass filter $H_{1}$. 

\begin{lemma}
\label{l2.1}
Let $H_{0}=aw$ be a low-pass filter supported on a bounded set $J$ as above and let $v\geq 2^{s}-1$. If $M=(m_{kt})_{k,t=1}^{N}$ is a diagonal matrix as in equation \eqref{2}, then
\begin{itemize}
\item[(a)] $m_{kk}>0$ for all $k=1,\ldots,N$.
\item[(b)] $v+1\geq N$.
\end{itemize}
\end{lemma}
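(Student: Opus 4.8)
The plan is to handle the two assertions in order, since (a) is exactly what makes the rank argument for (b) work. First I would prove (a) by simply reading off the diagonal of $M=a^{T}a+B^{T}B$. Writing $B=(B_{jk})$ with $j=1,\ldots,v$ and $k=1,\ldots,N$, the $(k,k)$ entry is
\[
m_{kk}=a_{n_k}^{2}+\sum_{j=1}^{v}B_{jk}^{2}.
\]
Since every coefficient $a_{n_k}$ is assumed to lie in $\mathbb{R}\setminus\{0\}$, the first term is strictly positive and the second is a sum of squares, hence nonnegative; therefore $m_{kk}>0$. No structural input about $M$ beyond its definition is needed here.

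For part (b) the idea is a rank comparison. Because $M$ is diagonal and, by part (a), all its diagonal entries are strictly positive, $M$ is positive definite and in particular invertible, so $\rank(M)=N$. On the other hand $a=(a_{n_k})_{k=1}^{N}$ is a nonzero row vector, so $a^{T}a$ has rank exactly $1$, while $\rank(B^{T}B)=\rank(B)\le v$ because $B$ has only $v$ rows. Subadditivity of rank applied to the identity $M=a^{T}a+B^{T}B$ then gives
\[
N=\rank(M)\le\rank(a^{T}a)+\rank(B^{T}B)\le 1+v,
\]
which is precisely $v+1\ge N$.

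There is no real obstacle in this argument; the only point that must not be glossed over is that part (a) is genuinely being used, as it is what upgrades the trivial bound $\rank(M)\le N$ to the equality $\rank(M)=N$ that drives the conclusion. As an alternative presentation of the same computation, one may note that $B^{T}B=M-a^{T}a$ is positive semidefinite of rank at most $v$, while removing the rank-one perturbation $a^{T}a$ from the full-rank matrix $M$ can decrease the rank by at most one, so $v\ge\rank(B^{T}B)=\rank(M-a^{T}a)\ge N-1$; this is the eigenvalue-interlacing view of the same inequality. Either way, the proof is short and elementary, and the substantive content of the lemma is the observation that forcing $M$ to be diagonal couples the filter length $N$ of $H_{0}$ to the vector dimension $v$ of $H_{1}$.
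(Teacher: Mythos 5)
Your proof is correct. Part (a) is exactly the paper's argument: the $(k,k)$ entry of $M=a^{T}a+B^{T}B$ is the squared norm of the $k$-th column of $\bigl(\begin{smallmatrix}a\\ B\end{smallmatrix}\bigr)$, which is strictly positive because $a_{n_k}\neq 0$. For part (b) you take a slightly different route. The paper argues by contradiction through the Gram-matrix picture: since $M$ is the Gram matrix of the $N$ columns of the $(v+1)\times N$ matrix $\bigl(\begin{smallmatrix}a\\ B\end{smallmatrix}\bigr)$, diagonality of $M$ means these columns are pairwise orthogonal, and if $v+1<N$ one could not have $N$ nonzero pairwise orthogonal vectors in $\mathbb{R}^{v+1}$, so some $m_{kk}$ would vanish, contradicting (a). You instead count ranks directly: (a) makes the diagonal matrix $M$ invertible, so $N=\rank(M)\le\rank(a^{T}a)+\rank(B^{T}B)\le 1+v$ by subadditivity (your interlacing variant $\rank(B^TB)=\rank(M-a^Ta)\ge N-1$ is an equivalent repackaging). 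Both arguments are elementary and both genuinely use (a); the paper's version keeps in view the columns-of-$\bigl(\begin{smallmatrix}a\\ B\end{smallmatrix}\bigr)$ geometry that drives the rest of Section 2 (Parseval frames of rows, orthogonality of columns in Corollary 2.6), while yours makes the dimension count explicit and avoids the contradiction. One could even shortcut your computation by noting $\rank(M)=\rank\bigl(\begin{smallmatrix}a\\ B\end{smallmatrix}\bigr)\le v+1$ directly, but as written your argument is complete and correct.
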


\begin{proof}
(a) Since all the components of the vector $a$ in the expression of $H_{0}$ are non-zero, and since the $k$-th element in the diagonal of $M$, $m_{kk}$, corresponds to the square of the norm of the $k$-th column vector of $\begin{pmatrix}a\\ B\end{pmatrix}\in\mathbb{R}^{(v+1)\times N}$, we have $m_{kk}>0$. 

\noindent (b) If $v+1<N$, then we would have at least one element of the diagonal of $M$ be equal to zero, which by (a) leads to a  contradiction.
\end{proof}
\noindent In light of Lemma \ref{l2.1}, the pursuit of solutions for Problem [$A$] leads to the following modified formulation:

{\bf Problem [$A'$]:} \emph{Let $H_{0}=aw$ be a low-pass filter with bounded support $J$ such that $H_{0}(0)=1$. Given a natural number 
\[v\geq\max\left\{N-1,2^{s}-1\right\},\] 
we want to determine the real matrices $B\in\mathbb{R}^{v\times N}$ for which the matrix $M$ is diagonal and equation \eqref{4} is satisfied.}

\noindent We now notice that if Problem [$A'$] admits a solution $B$, then $B$ is a solution to Problem [$A$] as well. However, the solutions of Problem [$A$] are not exhausted by the solutions of Problem [$A'$], since solutions of the former arise even when $M$ is not diagonal. With this in mind, from now on we focus on Problem [$A'$] and we show that all its solutions define Parseval frames in finite dimensional spaces, which in turn define high-pass filters $H_{1}$ for homogeneous Parseval wavelet frames $X_{\Psi}$. Lemma \ref{l2.2} helps us get a good picture of the underlying geometry.

\begin{lemma}
\label{l2.2}
Let $\alpha,c\in\mathbb{R}^{1\times N}$, $c\neq 0$ and suppose $D\in\mathbb{R}^{v\times N}$ is such that
\begin{itemize}
\item[(a)] the rows of $\begin{pmatrix}\alpha\\D\end{pmatrix}$ form a Parseval frame for $\mathbb{R}^{N}$.
\item[(b)] $Dc^{T}=0$.
\end{itemize}
Then $\alpha$ and $c$ are collinear vectors.  
\end{lemma}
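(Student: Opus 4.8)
The plan is to reduce this statement to a one-line linear-algebra identity. First I would restate hypothesis (a) in matrix form: setting $A:=\begin{pmatrix}\alpha\\ D\end{pmatrix}\in\mathbb{R}^{(v+1)\times N}$, the fact that the rows of $A$ form a Parseval frame for $\mathbb{R}^{N}$ is precisely the statement that the associated frame operator is the identity, i.e.
\[
\alpha^{T}\alpha+D^{T}D \;=\; A^{T}A \;=\; I_{N}.
\]
The only subtlety worth flagging is that it is $A^{T}A$ (the sum $\sum_{i} r_i^{T}r_i$ over the rows $r_i$ of $A$) that equals $I_N$, not the Gram matrix $AA^{T}$; this is the Parseval property, which is weaker than orthonormality of the frame vectors themselves.

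Next I would apply this operator identity to the column vector $c^{T}$. By hypothesis (b) we have $Dc^{T}=0$, hence $D^{T}Dc^{T}=D^{T}(Dc^{T})=0$, and the identity collapses to
\[
(\alpha c^{T})\,\alpha^{T} \;=\; c^{T}.
\]
Since $\alpha c^{T}\in\mathbb{R}$ is a scalar, write $\lambda:=\alpha c^{T}$; then $c^{T}=\lambda\,\alpha^{T}$, that is, $c=\lambda\alpha$. Finally, since $c\neq 0$ we must have $\lambda\neq 0$, so in fact $\alpha=\lambda^{-1}c$ as well, and $\alpha,c$ are collinear (both nonzero), as claimed.

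I do not anticipate any genuine obstacle here; essentially the entire content of the lemma is the correct encoding of hypothesis (a) as $A^{T}A=I_N$, after which (b) does the rest in one step. If one prefers to avoid operator language, the same conclusion follows by evaluating the Parseval identity $\|x\|^{2}=|\langle x,\alpha\rangle|^{2}+\sum_{j}|\langle x,D_{j}\rangle|^{2}$ (with $D_j$ the rows of $D$) first at $x=\alpha$: this forces $\|\alpha\|=1$, since $\sum_{j}\langle\alpha,D_{j}\rangle^{2}=\|\alpha\|^{2}-\|\alpha\|^{4}\ge 0$ and $\alpha\neq 0$ (were $\alpha=0$, all frame vectors would lie in $\alpha^{\perp}$, contradicting the frame property on $x=c$). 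Then at $x=c$ hypothesis (b) kills the $D$-terms, leaving $\|c\|^{2}=\langle c,\alpha\rangle^{2}$, which with $\|\alpha\|=1$ is exactly the equality case of Cauchy–Schwarz, and that holds iff $c$ is a scalar multiple of $\alpha$.
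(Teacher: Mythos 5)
Your proof is correct and is essentially the paper's own argument: applying the frame-operator identity $\alpha^{T}\alpha+D^{T}D=I_{N}$ to $c^{T}$ is exactly the reconstruction formula $c=\langle \alpha,c\rangle\alpha+\sum_{i}\langle c,d_{i}\rangle d_{i}$ that the paper uses, with hypothesis (b) killing the $D$-terms. The Cauchy--Schwarz variant you sketch at the end is a harmless alternative, but the core reasoning coincides with the paper's.
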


\begin{proof}
Let $d_{i}$ denote the $i$-th row vector of $D$. Then for $c\in\mathbb{R}^{N}$ our assumptions imply
\[c=\langle \alpha,c\rangle \alpha+\sum_{i=1}^{v}\langle c,d_{i}\rangle d_{i}=\langle \alpha,c\rangle \alpha.\]
Hence, $\alpha$ and $c$ are collinear.
\end{proof}

\begin{lemma}
\label{l2.3}
Let $\alpha\in\mathbb{R}^{1\times N}$ be such that $\Vert \alpha\Vert_{2}=1$. Then for any $v\geq N-1$, there always exists a matrix $D\in\mathbb{R}^{v\times N}$ such that the rows of $\begin{pmatrix}\alpha\\D\end{pmatrix}$ form a Parseval frame for $\mathbb{R}^{N}$.
\end{lemma}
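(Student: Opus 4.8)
The plan is to reduce the statement to the standard matrix characterization of finite Parseval frames: a collection of vectors $\{f_i\}_{i=1}^{m}\subset\mathbb{R}^{N}$, viewed as the rows of an $m\times N$ matrix $F$, forms a Parseval frame for $\mathbb{R}^{N}$ precisely when the frame operator $x\mapsto\sum_{i}\langle x,f_i\rangle f_i$ equals the identity, i.e.\ when $F^{T}F=I_{N}$. Applying this with $F=\begin{pmatrix}\alpha\\D\end{pmatrix}$, the conclusion we want becomes the single matrix equation $\alpha^{T}\alpha+D^{T}D=I_{N}$, so the task is to produce some $D\in\mathbb{R}^{v\times N}$ with $D^{T}D=I_{N}-\alpha^{T}\alpha$.

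Next I would analyze $P:=I_{N}-\alpha^{T}\alpha$. Since $\Vert\alpha\Vert_{2}=1$, the matrix $\alpha^{T}\alpha$ is the orthogonal projection of $\mathbb{R}^{N}$ onto $\Span\{\alpha\}$, hence $P$ is the orthogonal projection onto the orthogonal complement of $\alpha$; in particular $P$ is symmetric positive semidefinite of rank $N-1$. Concretely, extend $\alpha$ to an orthonormal basis $\alpha,u_{2},\ldots,u_{N}$ of $\mathbb{R}^{N}$ by Gram--Schmidt, so that $\alpha^{T}\alpha+\sum_{i=2}^{N}u_{i}^{T}u_{i}=I_{N}$, that is, $P=\sum_{i=2}^{N}u_{i}^{T}u_{i}$.

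Then I would simply exhibit $D$. Let $D_{0}\in\mathbb{R}^{(N-1)\times N}$ be the matrix whose rows are $u_{2},\ldots,u_{N}$; by construction $D_{0}^{T}D_{0}=P$, so the rows $\alpha,u_{2},\ldots,u_{N}$ (being orthonormal) already form a Parseval frame for $\mathbb{R}^{N}$, which settles the case $v=N-1$. For $v>N-1$, set $D=\begin{pmatrix}D_{0}\\0\end{pmatrix}$, appending $v-(N-1)$ zero rows; appending zero vectors changes neither $D^{T}D$ nor the frame operator, so $\alpha^{T}\alpha+D^{T}D=I_{N}$ still holds and the rows of $\begin{pmatrix}\alpha\\D\end{pmatrix}$ form a Parseval frame for $\mathbb{R}^{N}$.

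I do not expect a genuine obstacle here: the content of the lemma is exactly that $\rank(I_{N}-\alpha^{T}\alpha)=N-1$, which is why the hypothesis $v\geq N-1$ is the precise threshold, and the only point needing a word of care is that zero padding preserves the Parseval property when $v$ strictly exceeds $N-1$. As an alternative to Gram--Schmidt one could invoke the spectral decomposition $P=U\Lambda U^{T}$ and take $D_{0}$ to be the $(N-1)\times N$ block $\Lambda^{1/2}U^{T}$ corresponding to the nonzero eigenvalues, obtaining the same factorization $D_{0}^{T}D_{0}=P$.
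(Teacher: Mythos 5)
Your proof is correct and takes essentially the same route as the paper: both arguments rest on the Gram-matrix characterization $\begin{pmatrix}\alpha\\D\end{pmatrix}^{T}\begin{pmatrix}\alpha\\D\end{pmatrix}=I_{N}$ and on completing $\alpha$ to an orthonormal basis, i.e.\ on producing $D$ with $D^{T}D=I_{N}-\alpha^{T}\alpha$. The only cosmetic difference is the case $v>N-1$: the paper sets $D=\begin{pmatrix}0_{v\times 1}&|&U_{v\times(N-1)}\end{pmatrix}V$ with $U$ having orthonormal columns (so no zero rows occur), whereas you append zero rows; both choices satisfy $D^{T}D=I_{N}-\alpha^{T}\alpha$, and zero vectors do not affect the Parseval identity.
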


\begin{proof}
We prove the statement by presenting an explicit construction of such a matrix $D$. Suppose $V\in\mathbb{R}^{N\times N}$ is such that its first row vector is equal to $\alpha$ and its columns form an orthonormal set for $\mathbb{R}^{N}$. Therefore, we can write
\[\alpha=e_{1}^{T}V\]
where $e_{1}\in\mathbb{R}^{N\times 1}$ is the first vector of the standard basis for $\mathbb{R}^{N}$. We set 
\[D=\begin{pmatrix}0_{v\times 1}&|&U_{v\times(N-1)}\end{pmatrix}V\]
and assume that the columns of $U$ form an orthonormal set. Such a matrix $U$ exists because $v\geq N-1$. Then
\begin{align*}
\begin{pmatrix}\alpha\\D\end{pmatrix}^{T}\begin{pmatrix}\alpha\\D\end{pmatrix}&=\alpha^{T}\alpha+D^{T}D\\
&=V^{T}\left(e_{1}e_{1}^{T}+\begin{pmatrix}0_{v\times 1}&|&U_{v\times(N-1)}\end{pmatrix}^{T}\begin{pmatrix}0_{v\times 1}&|&U_{v\times(N-1)}\end{pmatrix}\right)V\\
&=V^{T}I_{N}V=I_{N}
\end{align*}
Hence, the columns $\begin{pmatrix}\alpha\\D\end{pmatrix}$ are an orthonormal set of $\mathbb{R}^{N}$ and so the rows of $\begin{pmatrix}\alpha\\D\end{pmatrix}$ form a Parseval frame for $\mathbb{R}^{N}$.
\end{proof}
\begin{remark}
The conclusion of Lemma \ref{l2.3} comes from the fact that if $k\geq N$ and $A$ is a $k\times N$ matrix whose columns form an orthonormal set of vectors in $\mathbb{R}^{N}$, then the rows of $A$ are a Parseval frame for $\mathbb{R}^{N}$. Indeed, let $R=\{r_{1},\ldots,r_{k}\}$ be the rows of $A=[a_{ij}]$. Then for every $x\in\mathbb{R}^{N}$, we have
\begin{align*}
\sum_{i=1}^{k}\left|\left\langle x, r_{i}\right\rangle\right|^{2}&=\sum_{i=1}^{k}\left(\sum_{j=1}^{N}x_{j}a_{ij}\right)^{2}\\
&=\sum_{i=1}^{k}\sum_{j=1}^{N}\sum_{l=1}^{N}x_{j}a_{ij}x_{l}a_{il}\\
&=\sum_{j=1}^{N}\sum_{l=1}^{N}x_{j}x_{l}\sum_{i=1}^{k}a_{ij}a_{il}\\
&=\sum_{j=1}^{N}x_{j}^{2}\\
&=\left\Vert x\right\Vert^{2}.
\end{align*}

\end{remark}
\noindent We are now ready to present the complete solution of Problem [$A'$].

\begin{proposition}
\label{p2.4}
Problem [$A'$] admits a solution if and only if 
\begin{itemize}
\item[(a)] $a_{n_{k}}>0$ for all $k=1,\ldots, N$.
\item[(b)] $H_{0}(q)=\delta_{0,q}$ for $q\in\{0,1/2\}^{s}$.
\end{itemize}
\end{proposition}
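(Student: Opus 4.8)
The plan is to establish both implications through a common reduction. On the one hand, a solution $B$ of Problem [$A'$] produces a positive diagonal matrix $M=\diag(m_{11},\dots,m_{NN})$, positive by Lemma \ref{l2.1}(a), satisfying $M-a^{T}a=B^{T}B\succeq 0$ together with \eqref{4}; on the other hand, to solve Problem [$A'$] it is enough to exhibit such a diagonal $M$ admitting a factorization $M-a^{T}a=B^{T}B$ with $B\in\mathbb{R}^{v\times N}$. The first step, I claim, forces $M=\diag(a_{n_{1}},\dots,a_{n_{N}})$, and such an $M$ is consistent with the remaining requirements exactly when (a) and (b) hold. Throughout I will use two elementary observations: putting $q=0$ in \eqref{4} gives $\sum_{k}m_{kk}=1$, and for $q\in\{0,1/2\}^{s}$ one has $e^{-2\pi i n_{k}\cdot q}=e^{2\pi i n_{k}\cdot q}=(-1)^{\,n_{k}\cdot(2q)}\in\{-1,1\}$, so that \eqref{4} with each $m_{kk}$ replaced by $a_{n_{k}}$ reads exactly $H_{0}(q)=\delta_{0,q}$.

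For necessity, I would start from a solution $B$. Since $M-a^{T}a=B^{T}B\succeq 0$ and $M$ is diagonal, $\sum_{k}m_{kk}x_{k}^{2}\ge\big(\sum_{k}a_{n_{k}}x_{k}\big)^{2}$ for every $x\in\mathbb{R}^{N}$. Substituting $x_{k}=a_{n_{k}}/m_{kk}$ and writing $S=\sum_{k}a_{n_{k}}^{2}/m_{kk}>0$ gives $S\ge S^{2}$, hence $S\le 1$. Combining this with the identities $\sum_{k}m_{kk}=1$ and $\sum_{k}a_{n_{k}}=H_{0}(0)=1$ through the triangle and Cauchy--Schwarz inequalities yields
\[1=\Big|\sum_{k}a_{n_{k}}\Big|^{2}\le\Big(\sum_{k}|a_{n_{k}}|\Big)^{2}\le\Big(\sum_{k}\frac{a_{n_{k}}^{2}}{m_{kk}}\Big)\Big(\sum_{k}m_{kk}\Big)=S\le 1,\]
so every inequality above is an equality. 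Equality in the triangle inequality, together with $\sum_{k}a_{n_{k}}=1>0$ and $a_{n_{k}}\neq 0$, forces $a_{n_{k}}>0$ for all $k$, which is (a); equality in Cauchy--Schwarz forces $m_{kk}$ to be proportional to $a_{n_{k}}$, and the common normalization $\sum_{k}m_{kk}=\sum_{k}a_{n_{k}}=1$ fixes the constant at $1$, so $m_{kk}=a_{n_{k}}$. Feeding $M=\diag(a_{n_{1}},\dots,a_{n_{N}})$ into \eqref{4} and invoking the second observation above yields $H_{0}(q)=\delta_{0,q}$, i.e., (b).

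For sufficiency, I would assume (a) and (b) and set $M=\diag(a_{n_{1}},\dots,a_{n_{N}})$, which has strictly positive diagonal by (a); then \eqref{4} holds because it reduces to $H_{0}(q)=\delta_{0,q}$, which is (b). To produce $B$, let $\Lambda=\diag(\sqrt{a_{n_{1}}},\dots,\sqrt{a_{n_{N}}})$ and $\alpha=a\Lambda^{-1}=(\sqrt{a_{n_{1}}},\dots,\sqrt{a_{n_{N}}})\in\mathbb{R}^{1\times N}$, so that $\|\alpha\|_{2}^{2}=\sum_{k}a_{n_{k}}=H_{0}(0)=1$. By Lemma \ref{l2.3} (applicable since $v\ge N-1$) there exists $D\in\mathbb{R}^{v\times N}$ whose rows, together with $\alpha$, form a Parseval frame for $\mathbb{R}^{N}$, equivalently $\alpha^{T}\alpha+D^{T}D=I_{N}$; putting $B=D\Lambda$ then gives $a^{T}a+B^{T}B=\Lambda\big(\alpha^{T}\alpha+D^{T}D\big)\Lambda=\Lambda^{2}=M$, which is diagonal. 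Hence $B$ solves Problem [$A'$].

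I expect the necessity direction to be the main obstacle. The heart of the matter is that the single semidefiniteness constraint $M-a^{T}a\succeq 0$, activated by the substitution $x_{k}=a_{n_{k}}/m_{kk}$, interacts with the two normalizing identities through classical inequalities to collapse everything to equality, simultaneously producing the positivity of the $a_{n_{k}}$ and the exact identification $m_{kk}=a_{n_{k}}$. Once $M$ is pinned down, extracting (b) from \eqref{4} is immediate, and the converse is a routine rescaling built on Lemma \ref{l2.3}.
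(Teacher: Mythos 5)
Your proof is correct, and the necessity direction takes a genuinely different route from the paper's. For sufficiency you do exactly what the paper does: set $c=(\sqrt{a_{n_{k}}})_{k}$, invoke Lemma \ref{l2.3}, and rescale by $\diag(\sqrt{a_{n_{1}}},\ldots,\sqrt{a_{n_{N}}})$ to get $a^{T}a+B^{T}B=\diag(a_{n_{1}},\ldots,a_{n_{N}})$. For necessity, however, the paper stays inside its frame-theoretic framework: it normalizes the columns to form $c_{n_{k}}=a_{n_{k}}/\sqrt{m_{kk}}$, observes that $M$ diagonal means the rows of $\binom{c}{D}$ are a Parseval frame for $\mathbb{R}^{N}$, applies the Parseval identity to the unit vector $m=(\sqrt{m_{kk}})_{k}$ to get $Dm^{T}=0$, and then uses the collinearity Lemma \ref{l2.2} to conclude $m_{kk}=a_{n_{k}}$. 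You instead work directly with the single positive-semidefiniteness constraint $M-a^{T}a=B^{T}B\succeq 0$: the test vector $x_{k}=a_{n_{k}}/m_{kk}$ gives $S=\sum_{k}a_{n_{k}}^{2}/m_{kk}\le 1$, and combining with $\sum_{k}m_{kk}=1$ and $\sum_{k}a_{n_{k}}=1$ via the triangle and Cauchy--Schwarz inequalities collapses everything to equality, which simultaneously yields $a_{n_{k}}>0$ and $m_{kk}=a_{n_{k}}$; this is essentially the observation that $\|c\|_{2}\le 1$, $\|m\|_{2}=1$ and $\langle c,m\rangle=1$ force $c=m$. Your argument is more elementary and self-contained (it bypasses Lemma \ref{l2.2} and the Parseval-frame reformulation in this direction entirely), while the paper's version keeps the necessity proof in the same Parseval-frame geometry that drives Theorem \ref{t2.5} and the rest of the construction; both establish the same equality-case mechanism, and your handling of the exponentials $e^{-2\pi i n_{k}\cdot q}=\pm 1$ to translate \eqref{4} into $H_{0}(q)=\delta_{0,q}$ is the same final step the paper uses implicitly.
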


\begin{proof}
Based on the statement of Problem [$A'$], let $M=a^{T}a+B^{T}B$ be a diagonal matrix and let $B$ be such that equation \eqref{4} is satisfied. We define the $1\times N$ vector $c=(c_{n_{k}})_{k=1}^{N}$ by
\[c_{n_{k}}=\frac{a_{n_{k}}}{\sqrt{m_{kk}}},\quad k=1,\ldots,N\]
where $a_{n_{k}}$ are the low-pass filter coefficients and we notice that $c$ is well defined since Lemma \ref{l2.1} implies $m_{kk}>0$. Moreover, the low-pass filter condition $H_{0}(0)=1$ gives 
\begin{equation}
\label{6}
\sum_{k=1}^{N}a_{n_{k}}=\sum_{k=1}^{N}c_{n_{k}}\sqrt{m_{kk}}=1,
\end{equation}
while by equation \eqref{4} for $q=0$ we obtain $\sum_{k=1}^{N}m_{kk}=1$, or equivalently, the $1\times N$ vector $m=(\sqrt{m_{kk}})_{k=1}^{N}$ satisfies $\Vert m\Vert_{2}=1$. Next, we note that $M$ is diagonal if and only if there exists a $v\times N$ matrix $D$ such that 
\[B=D\diag(\sqrt{m_{11}},\ldots,\sqrt{m_{NN}})\]
and the rows of $\begin{pmatrix}c\\D\end{pmatrix}\in\mathbb{R}^{(v+1)\times N}$ form a Parseval frame for $\mathbb{R}^{N}$. This implies that for any $\alpha\in\mathbb{R}^{N}$ we have
\begin{equation}
\label{7}
\left\Vert\alpha\right\Vert^{2}_{2}=\left|\left\langle \alpha,c\right\rangle\right|^{2}+\sum_{i=1}^{v}\left|\left\langle \alpha,d_{i}\right\rangle\right|^{2}.
\end{equation}
Applying equation \eqref{7} for $\alpha=m$ and utilizing equation \eqref{6} gives $Dm^{T}=0$. Hence, Lemma \ref{l2.2} implies that $c$ and $m$ are collinear and so $c_{n_{k}}=\lambda\sqrt{m_{kk}}$, or equivalently, $a_{n_{k}}=\lambda m_{kk}$ for some $\lambda\in\mathbb{R}$. By equation \eqref{6} we deduce
\[1=\sum_{k=1}^{N}a_{n_{k}}=\lambda\sum_{k=1}^{N}m_{kk}=\lambda,\]
so $a_{n_{k}}=m_{kk}>0$ for all $k=1,\ldots,N$ by Lemma \ref{l2.1}. Finally, this and equation \eqref{4} also imply $H_{0}(q)=\delta_{0,q}$.

Conversely, if $(a_{n_{k}})_{k=1}^{N}$ is a sequence of positive coefficients, then $c=(\sqrt{a_{n_{k}}})_{k=1}^{N}$ is a well-defined unit vector of $\mathbb{R}^{N}$. For $v\geq N-1$, Lemma \ref{l2.3} implies we can always find a real matrix $D\in\mathbb{R}^{v\times N}$ so that the rows of 
\[\begin{pmatrix}c\\D\end{pmatrix}\] 
form a Parseval frame for $\mathbb{R}^{N}$. Then for $B=D\diag(\sqrt{a_{n_{1}}},\ldots,\sqrt{a_{n_{N}}})$, we have that $c^{T}c+D^{T}D=I_{N}$ is equivalent to $a^{T}a+B^{T}B=\diag(a_{n_{1}},\ldots,a_{n_{N}})$. Hence $M$ is diagonal and $m_{kk}=a_{n_{k}}$. Then 
\[\delta_{0,q}=H_{0}(q)=\sum_{k=1}^{N}a_{n_{k}}e^{-2\pi in_{k}\cdot q}=\sum_{k=1}^{N}m_{kk}e^{-2\pi in_{k}\cdot q}\]
and the proof is complete.
\end{proof}

\noindent A surprising consequence of Proposition \ref{p2.4} is that in order to have a solution to Problem [$A'$], all the Fourier coefficients of the low-pass filter must be positive. Tensor products of spline refinable functions yield low-pass filters satisfying both conditions of Proposition \ref{p2.4}. Next, the first of the main results of this work summarizes the preceding discussion. 

\begin{theorem}\label{t2.5}
Let $H_{0}=aw\in L_{2}(\mathbb{T}^{s})$ be a low-pass filter with positive coefficients supported on a finite set of indices $J$ and suppose $H_{0}(q)=\delta_{0,q}$ for all $q\in\{0,1/2\}^{s}$. Then for $v\geq\max\{N-1,2^{s}-1\}$ and $c=(\sqrt{a_{n_{k}}})_{k=1}^{N}$,
\begin{itemize}
\item[(a)] All solutions of Problem [$A'$] are of the form
\[B=D\diag(\sqrt{a_{n_{1}}},\ldots,\sqrt{a_{n_{N}}})\]
where the rows of $\begin{pmatrix}c\\ D\end{pmatrix}$ form a Parseval frame for $\mathbb{R}^{N}$.
\item[(b)] Such matrices $D$ always exist.
\item[(c)] Any solution $B$ of Problem [$A'$] defines a high-pass filter $H_{1}=Bw$ whose associated family $X_{\Psi}$ forms a homogeneous compactly supported framelet for $L_{2}(\mathbb{R}^{s})$ and therefore is a solution of Problem [$A$].
\end{itemize}
\end{theorem}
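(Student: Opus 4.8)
The plan is to assemble Theorem~\ref{t2.5} directly from the pieces already established, treating it mostly as a bookkeeping exercise that packages Proposition~\ref{p2.4} and Lemma~\ref{l2.3} into the language of filters and framelets. First I would observe that the hypotheses of the theorem---$H_0$ has positive coefficients and $H_0(q)=\delta_{0,q}$ for all $q\in\{0,1/2\}^s$---are precisely conditions (a) and (b) of Proposition~\ref{p2.4}, so that proposition immediately guarantees that Problem~[$A'$] admits a solution and, more importantly, tells us the exact shape of \emph{every} such solution. In particular, the forward direction of Proposition~\ref{p2.4}'s proof shows that any diagonal $M=a^Ta+B^TB$ satisfying \eqref{4} forces $m_{kk}=a_{n_k}$, hence $c=(\sqrt{a_{n_k}})_{k=1}^N$ is the associated unit vector and $B=D\diag(\sqrt{a_{n_1}},\ldots,\sqrt{a_{n_N}})$ with the rows of $\begin{pmatrix}c\\ D\end{pmatrix}$ a Parseval frame for $\mathbb{R}^N$; this is exactly part~(a).

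For part~(b), I would simply invoke Lemma~\ref{l2.3}: since $\|c\|_2^2=\sum_k a_{n_k}=H_0(0)=1$ (using $q=0$ in the hypothesis), $c$ is a unit vector, and since $v\geq N-1$, the lemma produces a matrix $D\in\mathbb{R}^{v\times N}$ making the rows of $\begin{pmatrix}c\\ D\end{pmatrix}$ a Parseval frame. Thus solutions of Problem~[$A'$] exist, and by the already-noted remark (``if Problem~[$A'$] admits a solution $B$, then $B$ is a solution to Problem~[$A$]''), each such $B$ also solves Problem~[$A$].

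Part~(c) requires tying the algebra back to the analytic objects. Here I would spell out that a solution $B$ of Problem~[$A'$] yields $H_1=Bw\in L_2^{v\times 1}(\mathbb{T}^s)$ which, by construction (it solves Problem~[$A$]), satisfies the UEP equations \eqref{1} for all $q\in\{0,1/2\}^s$; invoking the characterization quoted at the opening of Section~\ref{s2} (the UEP of Ron--Shen), the resulting affine family $X_\Psi$ is a Parseval wavelet frame for $L_2(\mathbb{R}^s)$. The constraint $v\geq 2^s-1$ is needed precisely for the modulation matrix to have enough columns. Compact support of the $\psi_i$ follows because $H_1=Bw$ is a trigonometric polynomial---a finite linear combination of the exponentials $e^{2\pi i n_k\cdot\gamma}$ with $n_k\in J$---so each $\widehat\psi_i(2\gamma)=(H_1)_i(\gamma)\widehat\phi(\gamma)$ expresses $\psi_i$ as a finite combination of integer shifts of $D_2\phi$, and $\phi$ itself is compactly supported; hence $\Psi$ is a compactly supported multi-wavelet and $X_\Psi$ a homogeneous compactly supported framelet.

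I do not anticipate a genuine obstacle, since every analytic input (the UEP characterization, compact support of $\phi$) is taken as given and every algebraic input is an earlier lemma or proposition; the only thing to be careful about is the logical threading---making explicit that ``$B$ solves Problem~[$A'$]'' $\Rightarrow$ ``$M$ diagonal and \eqref{4} holds'' $\Rightarrow$ ``\eqref{1} holds'' $\Rightarrow$ ``$X_\Psi$ is a Parseval frame''---and confirming that the compact support claim really does reduce to finiteness of $J$ together with compact support of $\phi$, rather than requiring any stability of the shifts of $\phi$ (which we have deliberately \emph{not} assumed). So the write-up is essentially: apply Proposition~\ref{p2.4} for (a), Lemma~\ref{l2.3} for (b), and the UEP plus the trigonometric-polynomial/compact-support observation for (c).
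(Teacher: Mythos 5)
Your proposal is correct and follows essentially the same route as the paper: part (a) via the forward direction of Proposition \ref{p2.4} (the equivalence between $M$ diagonal with \eqref{4} and the Parseval-frame form of $\begin{pmatrix}c\\ D\end{pmatrix}$), part (b) via Lemma \ref{l2.3}, and part (c) via the computation $w^{*}(\gamma+q)(a^{T}a+B^{T}B)w(\gamma)=\sum_{k}m_{kk}e^{-2\pi i n_{k}\cdot q}=\delta_{0,q}$ followed by the UEP. Your explicit remark that compact support of the $\psi_i$ needs only finiteness of $J$ and compact support of $\phi$ (no stability of shifts) is a small but welcome addition that the paper leaves implicit.
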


\begin{proof}
As we see in the proof of the converse of Proposition \ref{p2.4}, the assumptions imposed on $H_{0}$ guarantee the existence of a diagonal matrix 
\[M=\begin{pmatrix}a\\B\end{pmatrix}^{T}\begin{pmatrix}a\\B\end{pmatrix}\] 
whose entries satisfy 
\[\sum_{k=1}^{N}m_{kk}e^{-2\pi in_{k}\cdot q}=\delta_{0,q}.\]
Now (a) follows from the equivalence between $M$ being a diagonal matrix and the rows of $\begin{pmatrix}c\\D\end{pmatrix}$ forming a Parseval frame for $\mathbb{R}^{N}$. (b) follows directly from Lemma \ref{l2.3}. Lastly, for (c), we have
\begin{align*}
\overline{H_{0}(\gamma+q)}H_{0}(\gamma)+H_{1}^{*}(\gamma+q)H_{1}(\gamma)&=W^{*}(\gamma+q)\begin{pmatrix}a\\B\end{pmatrix}^{T}\begin{pmatrix}a\\B\end{pmatrix}W(\gamma)\\
&=\sum_{k=1}^{N}m_{kk}e^{-2\pi in_{k}\cdot q}\\
&=\delta_{0,q}\end{align*}
Thus $X_{\Psi}$ is a Parseval frame for $L_{2}(\mathbb{R}^{s})$.
\end{proof}

Next, we generalize the construction of directional frame atoms with small spatial support presented in \cite[Theorem 2]{BinHan2017}, where the authors use a ``projection method" to create orientations in the space domain essentially projected from higher dimensional Euclidean spaces to spaces with lower dimensionality. Like ours, their filters act like low order finite difference operators along the orientation of the atom. Here we recreate their main result in a somewhat more general framework, specifically for low-pass filters with positive coefficients satisfying $H_{0}(q)=\delta_{0,q}$ for all $q\in\{0,1/2\}^{s}$. This result was also generalized independently in \cite{Han_Diao}, where the very interesting constructs of Quasi-tight framelets were also first introduced.
 
\begin{corollary}
\label{c2.6}
Let $H_{0}=aw\in L_{2}(\mathbb{T}^{s})$ be a low-pass filter with positive coefficients supported on a finite set $J$ and suppose $H_{0}(q)=\delta_{0,q}$ for all $q\in\{0,1/2\}^{s}$. Then the $N(N-1)/2\times 1$ high-pass filter vector $H_{1}$ with components
\[\sqrt{a_{n_{k}}a_{n_{t}}}\left(-e^{2\pi in_{k}\cdot}+e^{2\pi in_{t}\cdot}\right)\]
for all $k\neq t$ with $k<t$ defines an affine Parseval framelet for $L_{2}(\mathbb{R}^{s})$.
\end{corollary}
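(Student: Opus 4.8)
\noindent\emph{Proof plan.} The strategy is to recognize the displayed $H_1$ as the high-pass filter produced by Theorem \ref{t2.5} for one particular choice of $D$, so that parts (a)--(c) of that theorem deliver the conclusion directly. Set $c=(\sqrt{a_{n_k}})_{k=1}^{N}$; since $H_0(0)=\sum_{k}a_{n_k}=1$ (the $q=0$ instance of the hypothesis), $c$ is a unit vector of $\mathbb{R}^N$, and each $\sqrt{a_{n_k}}$ is real because the coefficients are positive. I would index the rows of a matrix $D\in\mathbb{R}^{N(N-1)/2\times N}$ by the pairs $(k,t)$ with $k<t$ and take the $(k,t)$-th row to be $d_{k,t}=\sqrt{a_{n_k}}\,e_t-\sqrt{a_{n_t}}\,e_k$, where $e_1,\dots,e_N$ is the standard basis of $\mathbb{R}^N$. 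Then $B=D\diag(\sqrt{a_{n_1}},\dots,\sqrt{a_{n_N}})$ has $(k,t)$-th row equal to $\sqrt{a_{n_k}a_{n_t}}\,(e_t-e_k)^{T}$, so the $(k,t)$-th component of $H_1=Bw$ is exactly $\sqrt{a_{n_k}a_{n_t}}\big(-e^{2\pi i n_k\cdot}+e^{2\pi i n_t\cdot}\big)$, matching the statement. It therefore remains to check two things: that the rows of $\begin{pmatrix}c\\ D\end{pmatrix}$ form a Parseval frame for $\mathbb{R}^N$, and that $v=N(N-1)/2$ meets the dimension requirement $v\ge\max\{N-1,2^s-1\}$ of Theorem \ref{t2.5}.

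For the first point, the content is the Gram identity $c^{T}c+D^{T}D=I_N$, a graph-Laplacian-type decomposition of the identity. I would verify it entrywise: writing the $(k,t)$-summand of $D^{T}D=\sum_{k<t}d_{k,t}^{T}d_{k,t}$ as $a_{n_k}e_te_t^{T}+a_{n_t}e_ke_k^{T}-\sqrt{a_{n_k}a_{n_t}}(e_ke_t^{T}+e_te_k^{T})$, the $(i,i)$ entry of $D^{T}D$ is $\sum_{k<i}a_{n_k}+\sum_{t>i}a_{n_t}=\sum_{j\ne i}a_{n_j}=1-a_{n_i}$ and the off-diagonal $(i,j)$ entry is $-\sqrt{a_{n_i}a_{n_j}}$; adding $c^{T}c$, whose $(i,j)$ entry is $\sqrt{a_{n_i}a_{n_j}}$, produces $I_N$. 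By the Remark following Lemma \ref{l2.3} this is equivalent to the rows of $\begin{pmatrix}c\\ D\end{pmatrix}$ being a Parseval frame for $\mathbb{R}^N$, so $B$ is a solution of Problem [$A'$] of the form prescribed by Theorem \ref{t2.5}(a).

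For the dimension bound I would observe that, reducing the exponents modulo $2$ and grouping, $H_0(q)=\delta_{0,q}$ for all $q\in\{0,1/2\}^s$ is equivalent, after a Walsh--Hadamard transform on $\mathbb{F}_2^s$, to the statement that for every residue class $r\in\mathbb{F}_2^s$ the sum of the $a_{n_k}$ over those $k$ with $n_k\equiv r\pmod 2$ equals $2^{-s}$; since all coefficients are positive, every one of the $2^s$ residue classes is nonempty, whence $N\ge 2^s$. Consequently $N(N-1)/2\ge 2^s-1$ and (as $N\ge 2$) $N(N-1)/2\ge N-1$, so $v=N(N-1)/2$ satisfies the hypothesis of Theorem \ref{t2.5}, and part (c) of that theorem yields that $H_1=Bw$ defines a compactly supported affine Parseval framelet $X_\Psi$ for $L_2(\mathbb{R}^s)$. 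The only genuinely creative step is guessing the matrix $D$, equivalently the decomposition $I_N-c^{T}c=\sum_{k<t}d_{k,t}^{T}d_{k,t}$; everything else is bookkeeping, with the mild caveat that the bound $N\ge 2^s$ is what makes the invocation of Theorem \ref{t2.5} legitimate.
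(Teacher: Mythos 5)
Your proposal is correct and follows essentially the same route as the paper: you exhibit the same rank-one decomposition (your Gram identity $c^{T}c+D^{T}D=I_{N}$ is exactly the paper's verification that the columns of $\begin{pmatrix}a\\B\end{pmatrix}$ are orthogonal with squared norms $a_{n_{k}}$, i.e.\ $M=\diag(a)$), so $B$ solves Problem [$A'$] and Theorem \ref{t2.5}(c) applies. Your additional Walsh--Hadamard argument that $N\ge 2^{s}$, hence $N(N-1)/2\ge\max\{N-1,2^{s}-1\}$, is a correct and slightly more careful check of the dimension hypothesis that the paper leaves implicit.
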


\begin{proof}
From the definition of $H_{1}$, we have
\begin{align*}
\begin{pmatrix}H_{0}\\H_{1}\end{pmatrix}(\gamma)&=\begin{pmatrix}a\\B\end{pmatrix}w(\gamma)\\
&=\begin{pmatrix}
a_{n_{1}}&a_{n_{2}}&a_{n_{3}}&\cdots&a_{n_{N-1}}&a_{n_{N}}\\
\hdashline[2pt/2pt]
-\sqrt{a_{n_{1}}a_{n_{2}}}&\sqrt{a_{n_{1}}a_{n_{2}}}&0&\cdots&0&0\\
-\sqrt{a_{n_{1}}a_{n_{3}}}&0&\sqrt{a_{n_{1}}a_{n_{3}}}&\cdots&0&0\\
&&&\ddots&&\\
-\sqrt{a_{n_{1}}a_{n_{N}}}&0&0&\cdots&0&\sqrt{a_{n_{1}}a_{n_{N}}}\\
\hdashline[2pt/2pt]
0&-\sqrt{a_{n_{2}}a_{n_{3}}}&\sqrt{a_{n_{2}}a_{n_{3}}}&\cdots&0&0\\
&&&\ddots&&\\
0&-\sqrt{a_{n_{2}}a_{n_{N}}}&0&\cdots&0&\sqrt{a_{n_{2}}a_{n_{N}}}\\
\hdashline[2pt/2pt]
&&&\vdots&&\\
0&0&0&\cdots&-\sqrt{a_{n_{N-1}}a_{n_{N}}}&\sqrt{a_{n_{N-1}}a_{n_{N}}}\\
\end{pmatrix}\begin{pmatrix}e^{2\pi i n_{1}\cdot\gamma}\\\vdots\\e^{2\pi i n_{N} \cdot\gamma} \end{pmatrix}
\end{align*}
for $\gamma\in\mathbb{T}^{s}$. Essentially, the rows of $B$ are generated from all the possible permutations of non-zero column pairs. This implies that $M=a^{T}a+B^{T}B$ is a diagonal matrix since the columns of $\begin{pmatrix}a\\B\end{pmatrix}$ form an orthogonal set of $N$ vectors in $\mathbb{R}^{N}$. Moreover, computing the norm of the $k$-th column of $\begin{pmatrix}a\\B\end{pmatrix}$ gives
\[a_{n_{k}}a_{n_{1}}+a_{n_{k}}a_{n_{2}}+\ldots+a_{n_{k}}^{2}+\ldots+a_{n_{k}}a_{n_{N}}=a_{n_{k}}\sum_{i=1}^{N}a_{n_{i}}=a_{n_{k}},\]
for all $k=1,\ldots, N$. Therefore, $M=\diag(a)$ and $B$ is a solution of Problem [$A'$]. The result follows by Theorem \ref{t2.5}.
\end{proof}

\section{Wavelets with directional vanishing moments and customizable filters.}\label{s3}

The core message of Section \ref{s2} is that under the assumptions of Theorem \ref{t2.5}, one can construct affine Parseval framelets for $L_{2}(\mathbb{R}^{s})$ arising from a refinable function by constructing Parseval frames for $\mathbb{R}^{N}$. This theorem, not only allows us to translate the difficult problem of solving the system of equations of the UEP into the much more algorithmically tractable problem of designing Parseval frames in finite dimensions, but furthermore enables us to custom-shape the filters defining the sought framelets. For example, sparse filters, edge detection filters, filters inducing wavelets with a high order of vanishing moments etc., are some of the high-pass filter families we know produce informative results in a variety of applications.

Our goal here is to propose a theoretical framework that enables us to hand-pick the high-pass filters that define a Parseval framelet. We can also impose certain directional vanishing moments to increase their sensitivity to singularities in application-specific targeted orientations. These design choices, although not the only realizable ones, drive the filter constructs in Section \ref{s4}. The key tool is Theorem \ref{t2.5}, which dictates that the matrix entries of the filters $h_{1,i}$ are determined by the rows of the sub-matrix $D$ of 
 \[\begin{pmatrix}c\\D\end{pmatrix}\in\mathbb{R}^{(v+1)\times N}, \quad v\geq N-1,\] 
whose rows form a Parseval frame for $\mathbb{R}^{N}$, and $c$ is a given unit norm $1\times N$ vector with positive components defined by the Fourier coefficients of $H_{0}$.  

Customizing filters that define affine multi-dimensional Parseval frames and/or selecting the number and direction of their vanishing moments is not a straightforward task. It requires the development of a number of tools which guarantee that in every Parseval frame filter ensemble we create, we maximize the number of filters with those desirable properties.  Each such filter set may have to contain some filters acting as a complement to the set of filters with pre-designed properties in order to derive a Parseval frame. A significant amount of this section is devoted to making their contributions and their number as small as possible (Theorem \ref{t3.2}). In order to achieve these goals, we first need to develop certain filter design tools utilizing Theorem \ref{t2.5}.
\begin{itemize}
\item[(i)] We begin by presenting a sufficient condition for pre-determining $L$ rows of $D$, or a sub-matrix $D_{1}\in\mathbb{R}^{L\times N}$ whose rows are orthogonal to $c$ so that there exist appropriate matrices $D_{2}$ for which the rows of 
\[\begin{pmatrix}c\\ D_{1}\\ D_{2}\end{pmatrix}\]
form a Parseval frame for $\mathbb{R}^{N}$ [Lemma \ref{l3.1}]. The sub-matrix $D_2$ determines the filters acting as a complement to the set of customized filters defined by $D_1$.
\item[(ii)] Next, we seek a technique to optimize the rows of $D_{1}$ to control redundancy and simultaneously minimize the reconstruction error when we choose to omit the framelets $\psi_{i}$ resulting from $D_{2}$ [Theorem \ref{t3.2}]. The algorithm implementing (i) and (ii) can be found at the beginning of Section \ref{s4}. 
\item[(iii)] Finally, we give a characterization of the directional vanishing moment orders (DVM) of framelets, but also how one can explicitly construct wavelets with up to $N-1$ DVM.
\end{itemize}

\noindent The next Lemma addresses (i). In this setting, the affine framelets induced by the rows of $D_{1}$ are pre-designed but it is not necessary that they form an affine frame for $L_{2}(\mathbb{R}^{s})$. From now on we use the notation
\[Q:=\begin{pmatrix}c\\ D_{1}\end{pmatrix}.\] 

\begin{lemma}
\label{l3.1}
Let $D_{1}$ be a fixed $L\times N$ matrix with rows orthogonal to $c$. If the singular values of $Q$ satisfy $\sigma_{i}\leq 1$ for all $i=1,\ldots,L+1$, then there exists an $N\times N$ matrix $D_{2}$ such that the rows of 
\[\begin{pmatrix}Q\\ D_{2}\end{pmatrix}=\begin{pmatrix}c\\ D_{1}\\D_{2}\end{pmatrix}\]
form a Parseval frame for $\mathbb{R}^{N}$. In this case, the Parseval frame consists of $v=L+N+1$ vectors in $\mathbb{R}^{N}$.
\end{lemma}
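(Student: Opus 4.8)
The plan is to translate the Parseval-frame requirement into a single matrix identity and then solve it with a matrix square root. Recall (as in the Remark following Lemma \ref{l2.3}) that the rows of a matrix $A\in\mathbb{R}^{k\times N}$ form a Parseval frame for $\mathbb{R}^{N}$ precisely when $A^{T}A=I_{N}$, since the frame operator of the row vectors $r_{i}$ of $A$ is $\sum_{i}r_{i}r_{i}^{T}=A^{T}A$. Applying this with $A=\begin{pmatrix}Q\\ D_{2}\end{pmatrix}$, which has $L+1+N\geq N$ rows, the desired conclusion is equivalent to
\[
Q^{T}Q+D_{2}^{T}D_{2}=I_{N},\qquad\text{i.e.}\qquad D_{2}^{T}D_{2}=I_{N}-Q^{T}Q .
\]
So the lemma reduces to producing an $N\times N$ real matrix $D_{2}$ whose Gram matrix $D_{2}^{T}D_{2}$ equals $I_{N}-Q^{T}Q$, and such a $D_{2}$ exists if and only if $I_{N}-Q^{T}Q$ is positive semidefinite.

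Next I would verify that positive semidefiniteness using the singular value hypothesis. The matrix $Q^{T}Q$ is symmetric and positive semidefinite, and its eigenvalues are exactly the squares of the singular values of $Q$, together with zeros to make up multiplicity $N$. By assumption $\sigma_{i}\leq 1$ for every $i=1,\ldots,L+1$, hence every eigenvalue of $Q^{T}Q$ lies in $[0,1]$, so $I_{N}-Q^{T}Q\succeq 0$. (Concretely, the orthogonality of the rows of $D_{1}$ to $c$ makes $QQ^{T}=\diag(\|c\|_{2}^{2},\,D_{1}D_{1}^{T})$ block diagonal, so the singular values of $Q$ are $\|c\|_{2}=1$ together with those of $D_{1}$, and the hypothesis just says $\|D_{1}\|_{\mathrm{op}}\leq 1$.) Now set $D_{2}=(I_{N}-Q^{T}Q)^{1/2}$, the unique symmetric positive semidefinite square root; it is a genuine $N\times N$ real matrix and satisfies $D_{2}^{T}D_{2}=D_{2}^{2}=I_{N}-Q^{T}Q$, which is exactly the identity required above. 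Hence the rows of $\begin{pmatrix}Q\\ D_{2}\end{pmatrix}$ form a Parseval frame for $\mathbb{R}^{N}$, and this frame has $v=(L+1)+N=L+N+1$ elements, as claimed.

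The argument is essentially routine, so there is no genuine obstacle; the only step that deserves care is the spectral one — identifying the eigenvalues of $Q^{T}Q$ with the squared singular values and invoking the existence of the PSD square root — and noting that $D_{2}$ may have some zero rows when $I_{N}-Q^{T}Q$ is rank-deficient, which is harmless since zero vectors may be freely adjoined to a Parseval frame. An equivalent, more explicit route avoids the abstract square root: writing the SVD $Q=U\Sigma V^{T}$ with $\Sigma$ the $(L+1)\times N$ matrix of singular values, one may take $D_{2}=V\sqrt{I_{N}-\Sigma^{T}\Sigma}\,V^{T}$, the radicand having nonnegative diagonal entries $1-\sigma_{i}^{2}$ exactly because $\sigma_{i}\leq 1$. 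It is also worth remarking that the condition $\sigma_{i}\leq 1$ is necessary as well: from $Q^{T}Q\preceq Q^{T}Q+D_{2}^{T}D_{2}=I_{N}$ any completion forces every singular value of $Q$ to be at most $1$, so the hypothesis of the lemma is sharp.
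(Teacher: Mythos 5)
Your proof is correct and follows essentially the same route as the paper: both reduce the Parseval condition to the identity $Q^{T}Q+D_{2}^{T}D_{2}=I_{N}$ and use the hypothesis $\sigma_{i}\leq 1$ to build $D_{2}$ from the quantities $\sqrt{1-\sigma_{i}^{2}}$; your positive semidefinite square root $(I_{N}-Q^{T}Q)^{1/2}=V\sqrt{I_{N}-\Sigma^{T}\Sigma}\,V^{T}$ differs from the paper's choice $D_{2}=\Sigma_{2}V^{T}$ only by the orthogonal left factor $V$, which leaves $D_{2}^{T}D_{2}$ unchanged. A small bonus of your formulation is that it treats the cases $L+1\leq N$ and $L+1>N$ uniformly, whereas the paper writes out only the former and omits the latter.
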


\begin{proof}
We prove the case where $L+1\leq N$. Using Singular Value Decomposition (SVD), we have $Q=U\Sigma_{1}V^{T}$ for $U\in\mathbb{R}^{(L+1)\times (L+1)}$ and $V\in\mathbb{R}^{N\times N}$ unitary matrices and 
\[\Sigma_{1}=\begin{pmatrix}\diag(\sigma_{1},\ldots,\sigma_{L+1})&|&{\bf 0}_{(L+1)\times (N-L-1)}\end{pmatrix}\in\mathbb{R}^{(L+1)\times N}.\]
Now let $D_{2}=\Sigma_{2}V^{T}\in\mathbb{R}^{N\times N}$ with
\[\Sigma_{2}=\diag\left(\sqrt{1-\sigma_{1}^{2}},\ldots,\sqrt{1-\sigma_{L+1}^{2}},1,\ldots,1\right)\in\mathbb{R}^{N\times N}.\]
This gives
\[Q^{T}Q+D_{2}^{T}D_{2}=V(\Sigma_{1}^{T}\Sigma_{1}+\Sigma_{2}^{T}\Sigma_{2})V^{T}=VI_{N}V^{T}=I_{N}.\]
The case $L+1>N$ is similar and the proof is omitted.
\end{proof}

We remark that the number of non-zero singular values of $Q$ is directly linked to the total number $v$ of high-pass filters. The larger the number of singular values equal to $1$, the smaller the number of rows of $\Sigma_{2}$ is going to be, thus providing us with a tool to control the overall redundancy of the affine family $X_{\Psi}$. 

However, this is not the only notable aspect of this construction. All singular values $\sigma_{1}\geq\sigma_{2}\geq\cdots\geq\sigma_{L+1}$ come from the pre-designed filters induced by $D_{1}$. If $\sigma_{i}=1$ for $i=1,\ldots,L+1$, then whatever complementary filters we add using $D_{2}$ can be considered as the only part of the framelet construction over which we have no control, for it is determined by $V^{T}$. This observation leads us to consider (ii), the second point mentioned in the beginning of this section. 

One way to control the $D_{2}$-contributions is to eliminate the chance of introducing zeros as singular values, or in other words, by ensuring that $\rank(Q)=N$. As we will see in Theorem \ref{t3.2}, this can be done in a way that keeps the resulting singular values $\sigma_{i}$ as close to $1$ as possible. Nevertheless, this is one aspect of the $D_{2}$-construction we do not control.

The next theorem shows there exist matrices $D_{1}$ for which we can jointly maximize all singular values of $Q$ under the constraint $\sigma_{\max}(Q)\leq 1$. Moreover, provided that $\rank(Q)=N$, we want to see how accurate an approximation of an $L_{2}$ function $f$ one can obtain when disregarding the completion matrix $D_{2}$. For this, recall that if $\Psi=(\psi_{1},\ldots,\psi_{v})$ is a multi-wavelet whose corresponding affine family $X_{\phi,\Psi}^{0}$ forms a Parseval frame for $L_{2}(\mathbb{R}^{s})$, then the Calderon Condition states
\[|\widehat{\phi}(\gamma)|^{2}+\sum_{j=0}^{\infty}\sum_{i=1}^{v}\left|\widehat{\psi}_{i}\left(\frac{\gamma}{2^{j}}\right)\right|^{2}=1.\] 
We define
\[E:=1-\sum_{j=0}^{\infty}\sum_{i=1}^{L}\left|\widehat{\psi}_{i}\left(\frac{\gamma}{2^{j}}\right)\right|^{2}-|\widehat{\phi}(\gamma)|^{2}=\sum_{j=0}^{\infty}\sum_{i=L+1}^{v}\left|\widehat{\psi}_{i}\left(\frac{\gamma}{2^{j}}\right)\right|^{2},\]
as well as the reconstruction error of $f$
\begin{align*}
E(f):&=\Vert f\Vert^{2}_{L_{2}}-\sum_{j=0}^{\infty}\sum_{k\in\mathbb{Z}^{s}}\sum_{i=1}^{L}|\langle f,\psi_{i,j,k}\rangle|^{2}-\sum_{k\in\mathbb{Z}^{s}}|\langle f,T_{k}\phi\rangle|^{2}\\
&=\sum_{j=0}^{\infty}\sum_{k\in\mathbb{Z}^{s}}\sum_{i=L+1}^{v}|\langle f,\psi_{i,j,k}\rangle|^{2}.\end{align*}
We seek to establish a connection between the reconstruction error $E(f)$ and the simultaneously maximized singular values of $Q$.
\begin{theorem}
\label{t3.2}
\begin{itemize}
\item[(a)] Let $c$ be a $1\times N$ vector such that $\Vert c\Vert_{2}=1$ and suppose the rows of $D_{1}$, $\{d_{i}\}_{i=1}^{L}$, satisfy 
\[d_{i}c^{T}=0\] 
for all $i$. For $\lambda\in\mathbb{R}^{L}$, we define
$Q(\lambda):=\begin{pmatrix}c\\\diag(\lambda)D_{1}\end{pmatrix}$
and 
\[f_{c}(\lambda):=\tr\left(Q^{T}(\lambda)Q(\lambda)\right).\] 
Then the problem
\[\mathcal{P}:\begin{cases}\max f_{c}(\lambda)\\\text{subject to }\left\Vert Q^{T}(\lambda)Q(\lambda)\right\Vert\leq 1\end{cases}\] admits a solution.
\item[(b)] Let $\widetilde{\lambda}\in\mathbb{R}^{L}$ be a solution of problem $\mathcal{P}$ and let $\widetilde{D}_{1}=\diag\left(\widetilde{\lambda}\right)D_{1}\in\mathbb{R}^{L\times N}$ be such that $\rank(Q)=N$. Then 
\[E(f)\leq\sigma\Vert f\Vert_{L_{2}}^{2}\]
where $\sigma:=1-\sigma^{2}_{N}$ and the truncated non-homogeneous affine wavelet family 
\[\{D_{2}^{j}T_{k}\psi_{i}: j\in\mathbb{Z}, k\in\mathbb{Z}^{s}, i=1,\ldots,L\}\cup\{T_{k}\phi:k\in\mathbb{Z}^{s}\}\] 
is a frame with lower frame bound $\sigma^{2}_{N}$ and upper frame bound $1$.
\end{itemize}
\end{theorem}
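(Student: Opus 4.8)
\emph{Part (a).} Since the rows $d_1,\dots,d_L$ of $D_1$ are orthogonal to $c$, we have $Q^T(\lambda)Q(\lambda)=c^Tc+\sum_{i=1}^L\lambda_i^2\,d_i^Td_i$, so $f_c(\lambda)=\tr\bigl(Q^T(\lambda)Q(\lambda)\bigr)=1+\sum_{i=1}^L\lambda_i^2\lVert d_i\rVert_2^2$ is continuous on $\mathbb{R}^L$. The feasible set is nonempty (it contains $\lambda=0$, since $\lVert c^Tc\rVert=\lVert c\rVert_2^2=1$) and closed. It is bounded, because $c^Tc+\sum_i\lambda_i^2 d_i^Td_i\succeq\lambda_j^2 d_j^Td_j$ forces $\lambda_j^2\lVert d_j\rVert_2^2\le1$ for every $j$ with $d_j\neq0$ (rows with $d_j=0$ do not affect $f_c$ and may be discarded). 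Thus the feasible set is compact and $f_c$ attains a maximum on it by Weierstrass's theorem; this is problem $\mathcal P$.

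\emph{Part (b), set-up.} Put $Q=\begin{pmatrix}c\\\widetilde{D}_1\end{pmatrix}$. As $\widetilde{D}_1c^T=0$, the vector $c^T$ is an eigenvector of $Q^TQ=c^Tc+\widetilde{D}_1^T\widetilde{D}_1$ for the eigenvalue $1$, so $\lVert Q^TQ\rVert\ge1$; feasibility of $\widetilde\lambda$ gives $\lVert Q^TQ\rVert\le1$, whence $\sigma_1^2=\lVert Q^TQ\rVert=1$. Since also $\rank Q=N$, the singular values satisfy $1=\sigma_1\ge\cdots\ge\sigma_N>0$, so Lemma~\ref{l3.1} applies: the completion matrix it returns satisfies $D_2^TD_2=I_N-Q^TQ$, and the rows of $R:=\begin{pmatrix}c\\\widetilde{D}_1\\D_2\end{pmatrix}$ form a Parseval frame for $\mathbb{R}^N$. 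If $v_1=c^T,v_2,\dots,v_N$ is the orthonormal eigenbasis of $Q^TQ$ (eigenvalues $\sigma_1^2,\dots,\sigma_N^2$), then $D_2^TD_2=\sum_{\ell=2}^N(1-\sigma_\ell^2)v_\ell v_\ell^T$, so $D_2c^T=0$ and $0\preceq D_2^TD_2\preceq\sigma(I_N-c^Tc)$ with $\sigma=1-\sigma_N^2$. By Theorem~\ref{t2.5}, the multiwavelet $\Psi$ ($v=L+N$ components) built from $B=\begin{pmatrix}\widetilde{D}_1\\D_2\end{pmatrix}\diag(\sqrt{a_{n_1}},\dots,\sqrt{a_{n_N}})$ yields a Parseval framelet, so $X_{\phi,\Psi}^{(0)}$ is a Parseval frame; hence $\lVert f\rVert_{L_2}^2=A_0(f)+\sum_{j\ge0}W_j(f)$ and $E(f)=\sum_{j\ge0}W_j^{\mathrm{top}}(f)$, where $A_j(f)=\sum_k|\langle f,D_2^jT_k\phi\rangle|^2$, $W_j(f)=\sum_{i=1}^v\sum_k|\langle f,D_2^jT_k\psi_i\rangle|^2$, and $W_j^{\mathrm{top}}(f)=\sum_{i=L+1}^v\sum_k|\langle f,D_2^jT_k\psi_i\rangle|^2$ (here $D_2^j$ denotes the dilation operator, not a power of the completion matrix of Lemma~\ref{l3.1}).

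\emph{Part (b), the key estimate and conclusion.} The crux is $W_j^{\mathrm{top}}(f)\le\sigma\,W_j(f)$ for every $j\ge0$; by unitarity of dilation it suffices to treat $j=0$. With $u(\gamma)=(\sqrt{a_{n_k}}\,e^{2\pi in_k\cdot\gamma})_{k=1}^N$ we have $\lVert u(\gamma)\rVert_2=1$, $H_0=c\,u$, $H_1=\begin{pmatrix}\widetilde{D}_1\\D_2\end{pmatrix}u$, and the hypothesis $H_0(q)=\delta_{0,q}$ for $q\in\{0,\tfrac12\}^s$ is exactly the assertion that the vectors $\{u(\eta+q)\}_{q\in\{0,1/2\}^s}$ are orthonormal in $\mathbb{C}^N$ for every $\eta$. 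Fiberizing over $\mathbb{T}^s$ and splitting the frequency lattice by parity, the refinement identities $\widehat\psi_i(\gamma)=H_{1,i}(\tfrac\gamma2)\widehat\phi(\tfrac\gamma2)$, $\widehat\phi(\gamma)=H_0(\tfrac\gamma2)\widehat\phi(\tfrac\gamma2)$ show that the scale-$0$ coefficient sequences $(\langle f,T_k\phi\rangle)_k$ and $(\langle f,T_k\psi_i\rangle)_k$ are, fiberwise, $c\,b(\gamma)$ and $r_i^T b(\gamma)$, where $r_i$ is the $i$-th row of $\begin{pmatrix}\widetilde{D}_1\\D_2\end{pmatrix}$ and $b(\gamma)=\sum_q\overline{u(\tfrac\gamma2+q)}\,\beta(\tfrac\gamma2+q)\in\mathbb{C}^N$, with $\beta$ the $\mathbb{Z}^s$-periodization of $\widehat f(2\cdot)\overline{\widehat\phi}$; orthonormality of $\{u(\eta+q)\}_q$ gives $\lVert b(\gamma)\rVert_2^2=\sum_q|\beta(\tfrac\gamma2+q)|^2$. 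Because the rows of $R$ form a Parseval frame for $\mathbb{R}^N$ (hence for $\mathbb{C}^N$), $|c\,b(\gamma)|^2+\sum_{i=1}^v|r_i^Tb(\gamma)|^2=\lVert b(\gamma)\rVert_2^2$; integrating yields the one-level identity $A_0(f)+W_0(f)=A_1(f)$. For the top block, $\sum_{i=L+1}^v|r_i^Tb(\gamma)|^2=b(\gamma)^*D_2^TD_2\,b(\gamma)\le\sigma\bigl(\lVert b(\gamma)\rVert_2^2-|c\,b(\gamma)|^2\bigr)$ by the spectral bound above, so integrating and using $A_0(f)+W_0(f)=A_1(f)$ gives $W_0^{\mathrm{top}}(f)\le\sigma(A_1(f)-A_0(f))=\sigma\,W_0(f)$. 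Summing over scales,
\[
E(f)=\sum_{j\ge0}W_j^{\mathrm{top}}(f)\le\sigma\sum_{j\ge0}W_j(f)=\sigma\bigl(\lVert f\rVert_{L_2}^2-A_0(f)\bigr)\le\sigma\lVert f\rVert_{L_2}^2 .
\]
By the Parseval property of $X_{\phi,\Psi}^{(0)}$, the analysis energy of the truncated family equals $\lVert f\rVert_{L_2}^2-E(f)\in\bigl[(1-\sigma)\lVert f\rVert_{L_2}^2,\lVert f\rVert_{L_2}^2\bigr]=\bigl[\sigma_N^2\lVert f\rVert_{L_2}^2,\lVert f\rVert_{L_2}^2\bigr]$, and $\sigma_N^2>0$ since $\rank Q=N$; this is the asserted frame.

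\emph{Main obstacle.} Everything apart from the one-level bound $W_0^{\mathrm{top}}(f)\le\sigma W_0(f)$ is routine bookkeeping. That bound is \emph{not} a consequence of the pointwise filter estimate $\sum_{i=L+1}^v|H_{1,i}(\gamma)|^2\le\sigma(1-|H_0(\gamma)|^2)$ alone, because dyadic downsampling produces aliasing cross-terms among the modulation fibers; the resolution is that $H_0(q)=\delta_{0,q}$ makes those fibers orthonormal and $D_2c^T=0$ removes the low-pass direction $v_1=c^T$ exactly, so that the scale-$0$ coefficients are faithfully the analysis coefficients of a single vector $b(\gamma)\in\mathbb{C}^N$ against the finite Parseval frame given by the rows of $R$.
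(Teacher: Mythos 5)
Your proposal is correct. Part (a) is essentially the paper's argument (Weierstrass on a nonempty, closed, bounded feasible set; your explicit treatment of zero rows of $D_1$ is a touch more careful than the paper's bare assertion that $\Gamma$ is bounded). Part (b), however, follows a genuinely different route, even though both proofs rest on the same spectral kernel: since any Parseval completion satisfies $D_2^TD_2=I_N-Q^TQ$ and $c^T$ is a unit eigenvector of $Q^TQ$ for the eigenvalue $1$, one has $D_2^TD_2\preceq(1-\sigma_N^2)(I_N-c^Tc)$, which is precisely the paper's negative semidefinite matrix $\mathcal S$ behind the pointwise filter bound \eqref{8}. The paper then stays in the Fourier domain: it telescopes the quantities $|\widehat\psi_i(\gamma/2^j)|^2$ of the omitted wavelets against the fundamental function $\Theta$, lets the number of scales tend to infinity using $\Theta(\gamma/2^{j_0+1})\to1$ and $\widehat\phi(0)=1$ to obtain the pointwise bound $E\le\sigma$ on the omitted Calderon sum, and converts this into frame bounds by citing Theorem 3.2 of \cite{Hernandez_1996_4563}. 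You instead work in the coefficient domain, scale by scale: fiberizing the scale-zero coefficients through the vector $b(\gamma)$ of bracket products, you use the orthonormality of the modulation fibers (equivalent to $H_0(q)=\delta_{0,q}$, as you correctly observe) together with the finite Parseval frame formed by the rows of $c$, $\widetilde D_1$ and $D_2$ to establish the cascade identity $A_0+W_0=A_1$ and the per-scale inequality $W_j^{\mathrm{top}}(f)\le\sigma W_j(f)$, and then sum over $j\ge0$. Your route is self-contained (no appeal to the cited theorem nor to the limit of $\Theta$), it isolates exactly where $H_0(q)=\delta_{0,q}$ and $D_2c^T=0$ enter (the aliasing issue you flag as the main obstacle), and it delivers the slightly sharper estimate $E(f)\le\sigma\left(\Vert f\Vert_{L_2}^2-A_0(f)\right)$ together with the frame bounds $\sigma_N^2$ and $1$ directly; the paper's route is shorter given the cited machinery and produces the pointwise bound on the omitted spectral density as a by-product. (Both you and the paper read the truncated family with $j\ge0$, consistent with the definition of $E(f)$, despite the $j\in\mathbb{Z}$ appearing in the statement.)
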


\begin{proof}
(a) We define $\Gamma=\left\{\lambda\in\mathbb{R}^{L}: \left\Vert Q^{T}(\lambda)Q(\lambda)\right\Vert\leq 1\right\}$ and notice that for any $D_{1}\in\mathbb{R}^{L\times N}$ with rows in the orthogonal complement of $c$, if $\lambda\in\Gamma$, then 
\[\sigma_{\max}\left(Q(\lambda)\right)\leq 1.\] 
Moreover, $\Gamma$ is non-empty since $0_{L}\in \Gamma$, but also bounded. Now for a sequence $(\lambda_{n})_{n\in\mathbb{N}}\subset \Gamma$ such that $\lambda_{n}\to\lambda_{0}$, we have
\begin{align*}
\Vert Q^{T}(\lambda_{n})Q(\lambda_{n})-Q^{T}(\lambda_{0})Q(\lambda_{0})\Vert&=\Vert D_{1}^{T}\left(\diag(\lambda_{n})^{2}-\diag(\lambda_{0})^{2}\right)D_{1}\Vert\\
&\leq\Vert D_{1}\Vert^{2}\left\Vert\diag\left(\lambda_{n}^{2}-\lambda_{0}^{2}\right)\right\Vert\to 0
\end{align*}
as $n\to\infty$ and so $\Gamma$ is also closed. The result follows by the continuity of the trace function $f_{c}$.

\noindent (b) Since the rows of $\widetilde{D}_{1}$ are orthogonal to $c$ and since $\Vert c\Vert_{2}=1$, we have $\sigma_{1}=1$. Then by applying Lemma \ref{l3.1} to $\widetilde{D_{1}}$, we have 
\[\Sigma_{1}^{T}\Sigma_{1}=\diag(1,\sigma_{2}^{2},\ldots,\sigma_{N}^{2})\] 
and 
\[\Sigma_{2}^{T}\Sigma_{2}=(0,1-\sigma_{2}^{2},\ldots,1-\sigma_{N}^{2}),\]
where $\Sigma_{1}$ and $\Sigma_{2}$ are defined as in Lemma \ref{l3.1}. First, we claim
\begin{equation}\label{8}\sum_{i=L+1}^{v}|H_{i,1}(\gamma)|^{2}\leq(1-\sigma_{N}^{2})\sum_{i=1}^{v}|H_{i,1}(\gamma)|^{2}.\end{equation}
Indeed, since $\sigma_{N}^{2}-\sigma_{i}^{2}\leq 0$ for all $i=1,\ldots,N$, we notice that the matrix 
\begin{align*}\mathcal{S}:&=\Sigma_{2}^{T}\Sigma_{2}-(1-\sigma_{N}^{2})\left(\Sigma_{1}^{T}\Sigma_{1}+\Sigma_{2}^{T}\Sigma_{2}-\diag(1,0,\ldots,0)\right)\\
&=\diag(0,-\sigma_{2}^{2}+\sigma_{N}^{2},\ldots,-\sigma_{N-1}^{2}+\sigma_{N}^{2},0)
\end{align*}
is negative semi-definite. Hence
\[w^{*}(\gamma)\left(M^{1/2}\right)^{T}V\mathcal{S}V^{T}M^{1/2}w(\gamma)=\sum_{i=L+1}^{v}|H_{i,1}(\gamma)|^{2}-(1-\sigma_{N}^{2})\sum_{i=1}^{v}|H_{i,1}(\gamma)|^{2}\leq0\]
for $M^{1/2}=\diag(\sqrt{a_{n_{1}}},\ldots,\sqrt{a_{n_{N}}})$ and for almost every $\gamma\in\mathbb{T}^{s}$. Next, let $\theta_{j}:\mathbb{T}^{s}\to\mathbb{C}$ be given by
\[\theta_{j}(\cdot)=\prod_{k=0}^{j-1}H_{0}(2^{j-1-k}\cdot)H_{1}(2^{j}\cdot)\]
for any $j\geq0$. Recall that the Fundamental Function $\Theta:\mathbb{T}^{s}\to\mathbb{R}^{+}$ associated with the family $X_{\Psi}$ is given by 
\[\Theta(\cdot)=\sum_{j=0}^{\infty}|\theta_{j}(\cdot)|^{2}\] 
and recall that, \cite{DAUBECHIES20031,CHUI2002224}, for almost every $\gamma\in\mathbb{T}^{s}$ we have
\begin{equation}\label{9}\lim_{j\to\infty}\Theta\left(\frac{\gamma}{2^{j}}\right)=1.\end{equation}
We begin by considering the error of approximation for two scales of resolution. Specifically, using \eqref{8} and the definition of the Fundamental function above, we have
\begin{align*}
\sum_{i=L+1}^{v}\sum_{j=0}^{1}\left|\widehat{\psi}_{i}\left(\frac{\gamma}{2^{j}}\right)\right|^{2}&=\sum_{i=L+1}^{v}\left(\left|\widehat{\psi}_{i}(\gamma)\right|^{2}+\left|\widehat{\psi}_{i}\left(\frac{\gamma}{2}\right)\right|^{2}\right)\\
&\leq\sigma\sum_{i=1}^{v}\left(\left|H_{1,i}\left(\frac{\gamma}{2}\right)\right|^{2}\left|\widehat{\phi}\left(\frac{\gamma}{2}\right)\right|^{2}+\left|H_{1,i}\left(\frac{\gamma}{4}\right)\right|^{2}\left|\widehat{\phi}\left(\frac{\gamma}{4}\right)\right|^{2}\right)\\
&=\sigma\left(\left|H_{1}\left(\frac{\gamma}{2}\right)\right|^{2}\left|H_{0}\left(\frac{\gamma}{4}\right)\right|^{2}+\left|H_{1}\left(\frac{\gamma}{4}\right)\right|^{2}\right)\left|\widehat{\phi}\left(\frac{\gamma}{4}\right)\right|^{2}\\
&=\sigma\sum_{j=0}^{1}\left|\theta_{j}\left(\frac{\gamma}{4}\right)\right|^{2}\left|\widehat{\phi}\left(\frac{\gamma}{4}\right)\right|^{2}\\
&\leq\sigma\Theta\left(\frac{\gamma}{4}\right)\left|\widehat{\phi}\left(\frac{\gamma}{4}\right)\right|^{2}
\end{align*}
for almost every $\gamma\in\mathbb{T}^{s}$. Hence if $j_{0}\in\mathbb{N}$, proceeding inductively using the same technique yields
\begin{align*}\sum_{i=L+1}^{v}\sum_{j=0}^{j_{0}}\left|\widehat{\psi}_{i}\left(\frac{\gamma}{2^{j}}\right)\right|^{2}&\leq\sigma\sum_{j=0}^{j_{0}}\left|\theta_{j}\left(\frac{\gamma}{2^{j_{0}+1}}\right)\right|^{2}\left|\widehat{\phi}\left(\frac{\gamma}{2^{j_{0}+1}}\right)\right|^{2}\\
&\leq\sigma\Theta\left(\frac{\gamma}{2^{j_{0}+1}}\right)\left|\widehat{\phi}\left(\frac{\gamma}{2^{j_{0}+1}}\right)\right|^{2}
\end{align*}
Finally, using \eqref{9} and $\widehat{\phi}(0)=1$ and by letting $j_{0}$ tend to infinity we obtain $E\leq\sigma$. The result follows from Theorem 3.2 of \cite{Hernandez_1996_4563} for Parseval frames.
\end{proof}

\subsubsection*{A characterization of Directional Vanishing Moments (DVM)}

Recall that for a given unit vector $\beta\in\mathbb{R}^{s}$, we say a compactly supported wavelet $\psi$ has $n$ vanishing moments in the direction of $\beta$ if 
\[D^{r}_{\beta}\widehat{\psi}(0)=0\]
for all $r=0,1,\ldots,n-1$, where $D_{\beta}^{r}$ represents the $r$-th order directional derivative in the direction of $\beta$. A routine calculation shows
\[D_{\beta}^{r}\widehat{f}(0)=\mathcal{F}\left(\left(-2\pi i(x\cdot\beta)\right)^{r}f(x)\right)(0)\]
for every compactly supported $f\in L_{1}$, where $\mathcal{F}$ denotes the Fourier transform. The previous equation shows that DVM act just like regular moments, primarily in the direction of $\beta$. As in the one-dimensional case, the number of directional vanishing moments of a wavelet $\psi$ is expected to affect the rate of decay of the frame coefficients with respect to the scale $j$ at various directions at any point, especially at points of singularity. We illustrate this effect with Figure \ref{fdvm} below. Specifically, we consider a cubic polynomial image and the high-pass filter   
\[h=\begin{pmatrix}0.1655&-0.2372&0.0718\\
-0.0073&0.0146&-0.0073\\
-0.0207&0.0414&-0.0207\end{pmatrix}\]
corresponding to a wavelet with four DVM in the direction of $(0,1)$ and we notice that $2D$ convolution with $h$ produces an output with no edges.

\begin{figure}[H]
\centering
\includegraphics[width=0.3\textwidth,height=0.25\textwidth]{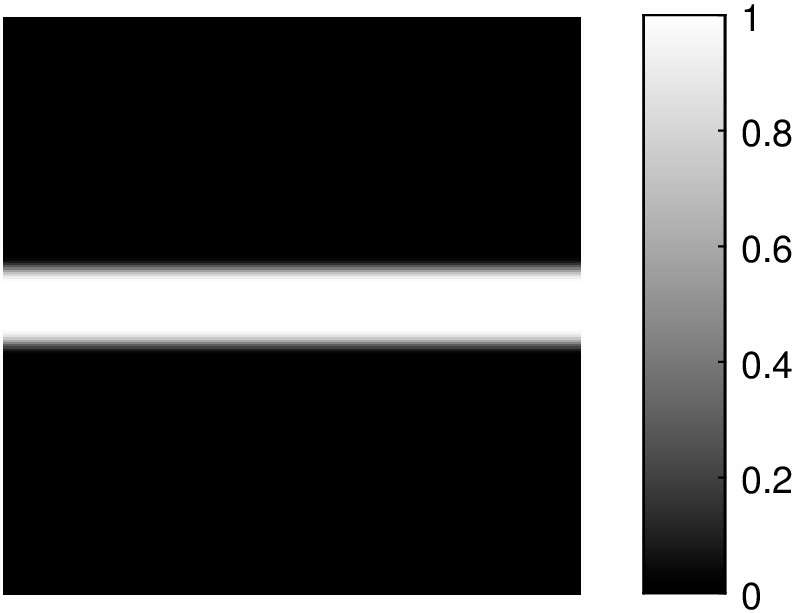}\quad
\includegraphics[width=0.3\textwidth,height=0.25\textwidth]{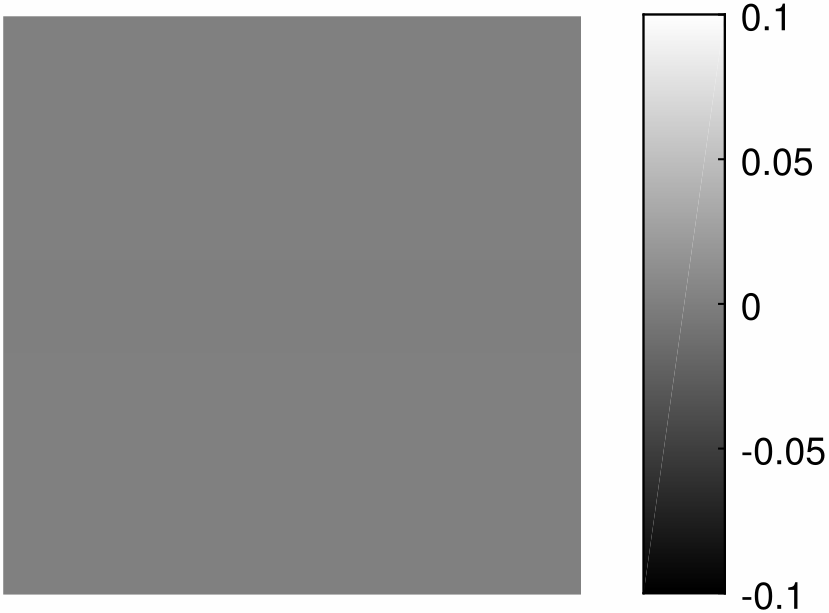}
\caption{Left: Cubic spline interpolation of binary image containing a single horizontal non-zero band. Values in this image are constant in the direction of $(0,1)$. Right: $2D$ convolution with $h$ defining a wavelet with $4$ directional vanishing moments in the direction of $(0,1)$. As expected, the lower polynomial degree of the intensity profile of the left panel relative to the number of DVM of $h$ parallel to $(0,1)$ practically flatten the cubic spline bump in the middle of the left panel.}
\label{fdvm}
\end{figure}

\noindent Next, assuming $B\in\mathbb{R}^{v\times N}$ is a solution to Problem [$A'$], we translate the DVM orders of a wavelet $\psi_{i}$ into certain geometric conditions in $\mathbb{R}^{N}$ via the following characterization:
\begin{proposition}
\label{p3.3}
Let $\beta\in\mathbb{R}^{s}$ and $\Psi=\left(\psi_{i}\right)_{i=1}^{v}$ be a multi-wavelet arising from a matrix $D$ as described in Theorem \ref{t2.5}. Then if $d_{i}$ denotes the $i$-th row vector of $D$, a given wavelet $\psi_{i}$ has $n$ vanishing moments in the direction of $\beta$ if and only if 
\[cZ^{r}d_{i}^{T}=0\]
for all $r=0,1,\ldots,n-1$ and for $Z:=\diag(\beta\cdot n_{1},\ldots,\beta\cdot n_{N})$.
\end{proposition}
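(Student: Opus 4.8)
The plan is to translate the directional-vanishing-moment condition $D_\beta^r\widehat{\psi}_i(0)=0$ into the claimed algebraic identity $cZ^r d_i^T=0$ by unwinding the definitions from Theorem \ref{t2.5} and computing the relevant derivatives explicitly. First I would recall that, by the refinement relation, $\widehat{\psi}_i(2\gamma)=H_{1,i}(\gamma)\widehat{\phi}(\gamma)$, where $H_{1,i}=b_i w$ and $b_i$ is the $i$-th row of $B=D\,\diag(\sqrt{a_{n_1}},\ldots,\sqrt{a_{n_N}})$. Using $\widehat{\phi}(0)=1$ and the fact that $\widehat{\phi}$ is smooth near the origin (so Leibniz applies), the vanishing of $D_\beta^r\widehat{\psi}_i(0)$ for $r=0,\dots,n-1$ is equivalent, by an induction on $r$, to the vanishing of $D_\beta^r H_{1,i}(0)$ for $r=0,\dots,n-1$: at the lowest order the $\widehat{\phi}$ factor contributes the nonzero value $1$, and at higher orders the extra terms from differentiating $\widehat{\phi}$ involve only lower-order derivatives of $H_{1,i}$ at $0$, which have already been set to zero. (One must also use that $D_\beta^r H_0(0)$ is finite, which holds since $H_0$ is a trigonometric polynomial.)

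Next I would compute $D_\beta^r H_{1,i}(0)$ directly from the exponential representation $H_{1,i}(\gamma)=\sum_{k=1}^N (b_i)_k e^{2\pi i n_k\cdot\gamma}$. Each directional derivative $D_\beta$ brings down a factor $2\pi i\,(\beta\cdot n_k)$, so
\[
D_\beta^r H_{1,i}(0)=\sum_{k=1}^N (b_i)_k (2\pi i)^r(\beta\cdot n_k)^r=(2\pi i)^r\, b_i\, Z^r\, \mathbf{1},
\]
where $\mathbf{1}$ is the all-ones vector and $Z=\diag(\beta\cdot n_1,\dots,\beta\cdot n_N)$; equivalently this is $(2\pi i)^r$ times the $k$-sum of $(b_i)_k(\beta\cdot n_k)^r$. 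Since $(2\pi i)^r\neq 0$, the DVM condition becomes $b_i Z^r e = 0$ for all $r\le n-1$, where I write the all-ones vector as $e$ (or, if the paper prefers, as a column of ones). Now substitute $b_i^T=\diag(\sqrt{a_{n_1}},\dots,\sqrt{a_{n_N}})\,d_i^T$; because $Z$ is diagonal it commutes with $\diag(\sqrt{a_{n_k}})$, and $\diag(\sqrt{a_{n_k}})\,e$ has $k$-th entry $\sqrt{a_{n_k}}=c_{n_k}$ by the definition of $c$ in Theorem \ref{t2.5}. Transposing, $b_i Z^r e = d_i\,\diag(\sqrt{a_{n_k}})\,Z^r e = d_i Z^r c^T = (cZ^r d_i^T)^T$, which is the stated identity (using $Z^T=Z$).

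The main obstacle, and the step needing the most care, is the first one: making rigorous the equivalence between vanishing derivatives of the product $\widehat{\psi}_i(2\,\cdot)=H_{1,i}(\cdot)\widehat{\phi}(\cdot)$ and vanishing derivatives of the factor $H_{1,i}$ alone. The chain rule in $D_2^j$ (the $2\gamma$ versus $\gamma$ scaling) only contributes harmless powers of $2$ and does not affect which derivatives vanish, so that is a minor bookkeeping issue. The real content is the Leibniz expansion: one should organize the induction so that at order $r$ the only term not already known to vanish is $\bigl(D_\beta^r H_{1,i}(0)\bigr)\widehat{\phi}(0)$, using $\widehat{\phi}(0)=1$ and the fact that mixed terms $\bigl(D_\beta^{r'}H_{1,i}(0)\bigr)\bigl(D_\beta^{r-r'}\widehat{\phi}(0)\bigr)$ with $r'<r$ vanish by the inductive hypothesis. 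Once this equivalence is in hand, the remaining computation is the routine differentiation of complex exponentials carried out above, and the identification with $cZ^r d_i^T$ via the diagonal-matrix commutation is immediate.
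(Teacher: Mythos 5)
Your proposal is correct and takes essentially the same route as the paper: reduce the directional vanishing moment condition on $\psi_{i}$ to the same condition on the mask $H_{1,i}$ via the two-scale relation, compute $D_{\beta}^{r}H_{1,i}(0)$ by differentiating the complex exponentials, and identify the resulting sum $\sum_{k}\sqrt{a_{n_{k}}}\,d_{i,k}(\beta\cdot n_{k})^{r}$ with $cZ^{r}d_{i}^{T}$. The only difference is that you make explicit the Leibniz/induction argument (using $\widehat{\phi}(0)=1$) behind the first reduction, which the paper simply asserts.
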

\begin{proof}
Since the multi-wavelet $\Psi$ satisfies the two-scale equation $\widehat{\Psi}(2\cdot)=H_{1}(\cdot)\widehat{\phi}(\cdot)$, we infer that $\psi_{i}$ has $n$ vanishing moments in the direction of $\beta$ if and only if $H_{1,i}$ has $n$ vanishing moments in the direction of $\beta$, where $H_{1,i}$ denotes the $i$-th component of $H_{1}$. Next, using $D_{\beta}^{2}$ to denote the second order directional derivative in the direction of $\beta$, we have
\begin{align*}
D_{\beta}^{2}\left(H_{1,i}(\gamma)\right)&=D_{\beta}\left(D_{\beta}\left(H_{1,i}(\gamma)\right)\right)\\
&=D_{\beta}\left(\left(\nabla\sum_{n_{k}\in J}\sqrt{a_{n_{k}}}d_{i,k}e^{2\pi in_{k}\cdot\gamma}\right)\cdot\beta\right)\\
&=2\pi i D_{\beta}\left(\sum_{n_{k}\in J}\sqrt{a_{n_{k}}}d_{i,k}e^{2\pi in_{k}\cdot\gamma}(\beta\cdot n_{k})\right)\\
&=(2\pi i)^{2}\sum_{n_{k}\in J}\sqrt{a_{n_{k}}}d_{i,k}e^{2\pi in_{k}\cdot\gamma}(\beta\cdot n_{k})^{2}.
\end{align*}
Proceeding inductively we find that the $r$-th order directional derivative in the direction of $\beta$ is given by
\[D_{\beta}^{r}\left(H_{1,i}(\gamma)\right)=(2\pi i)^{r}\sum_{n_{k}\in J}\sqrt{a_{n_{k}}}d_{i,k}e^{2\pi in_{k}\cdot\gamma}(\beta\cdot n_{k})^{r}.\]
Therefore, a given wavelet $\psi_{i}$ has $n$ vanishing moments in the direction of $\beta$ if and only if
\[\sum_{n_{k}\in J}\sqrt{a_{n_{k}}}d_{i,k}(\beta\cdot n_{k})^{r}=0\]
for all $r=0,1,\ldots,n-1$, or equivalently if and only if
\[cZ^{r}d_{i}^{T}=0\]
for all $r=0,1,\ldots,n-1$.
\end{proof} 
Next, based on Proposition \ref{p3.3}, we claim that for a given set of low-pass filter polynomial exponents $\{n_{k}\}_{k=1}^{N}$, there exist uncountably many direction vectors for which one can construct wavelets with $N-1$ DVM inducing solutions to Problem [$A'$]. The following proposition supports this claim.

\begin{proposition}\label{p3.4}
There exists a unit vector $\beta\in\mathbb{R}^{s}$ and a vector $d\in\mathbb{R}^{N}$ such that the high-pass filter with coefficients $(\sqrt{a_{n_{k}}}d_{k})_{k=1}^{N}$ induces a wavelet with $N-1$ vanishing moments in the direction of $\beta$.  
\end{proposition}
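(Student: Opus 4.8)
The plan is to read off the claim from the characterization in Proposition \ref{p3.3}: the high-pass filter with coefficients $(\sqrt{a_{n_k}}\,d_k)_{k=1}^N$ induces a wavelet with $N-1$ vanishing moments in the direction of a unit vector $\beta$ precisely when $cZ^rd^T=0$ for $r=0,1,\dots,N-2$, where $c=(\sqrt{a_{n_k}})_{k=1}^N$ and $Z=\diag(\beta\cdot n_1,\dots,\beta\cdot n_N)$. Setting $\xi_k:=\beta\cdot n_k$ and $y_k:=\sqrt{a_{n_k}}\,d_k$, these $N-1$ equations become $\sum_{k=1}^N\xi_k^{\,r}y_k=0$ for $r=0,\dots,N-2$, i.e., $y=(y_k)_{k=1}^N$ must lie in the kernel of the $(N-1)\times N$ Vandermonde matrix $[\xi_k^{\,r}]_{0\le r\le N-2,\ 1\le k\le N}$. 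So the whole proposition amounts to choosing $\beta$ so that this kernel is nontrivial and then pulling a suitable $d$ back from it.

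First I would pick the direction. The exponents $n_1,\dots,n_N\in\mathbb{Z}^s$ are pairwise distinct, so for each pair $k\neq t$ the set of unit vectors $\beta$ with $\beta\cdot(n_k-n_t)=0$ is a proper, measure-zero subset of the unit sphere of $\mathbb{R}^s$ (a lower-dimensional subsphere when $s\ge2$, and empty when $s=1$ since $n_k-n_t\in\mathbb{Z}^s\setminus\{0\}$). Any unit vector $\beta$ avoiding the finite union of these sets has $\xi_1,\dots,\xi_N$ pairwise distinct; for $s\ge2$ there are uncountably many such $\beta$, which is exactly the remark preceding the statement. For such a $\beta$ the $(N-1)\times N$ Vandermonde matrix has rank $N-1$, so its kernel is one-dimensional.

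Next I would write the kernel down and transfer it to $d$. With $p(x):=\prod_{k=1}^N(x-\xi_k)$, put $y_k:=1/p'(\xi_k)=\prod_{t\neq k}(\xi_k-\xi_t)^{-1}$, which is well defined and nonzero because the $\xi_k$ are distinct. Expanding both sides of the partial-fraction identity $1/p(x)=\sum_{k=1}^N y_k/(x-\xi_k)$ in powers of $1/x$ and matching coefficients gives $\sum_{k=1}^N\xi_k^{\,r}y_k=0$ for $r=0,\dots,N-2$ and $\sum_{k=1}^N\xi_k^{\,N-1}y_k=1$; hence $y$ spans the kernel. Setting $d_k:=y_k/\sqrt{a_{n_k}}$ (legitimate since $a_{n_k}>0$) produces a nonzero $d\in\mathbb{R}^N$ with $cZ^rd^T=\sum_k\xi_k^{\,r}y_k=0$ for $r=0,\dots,N-2$, so by Proposition \ref{p3.3} the filter with coefficients $(\sqrt{a_{n_k}}\,d_k)_k$ induces a wavelet with $N-1$ directional vanishing moments along $\beta$. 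To make sure this wavelet actually sits inside the construction of Theorem \ref{t2.5}, observe that the $r=0$ equation says $d\perp c$, and that all the vanishing-moment equations are homogeneous, so we may rescale $d$ to have $\|d\|_2\le1$ without affecting them; then the matrix $Q$ with rows $c$ and $d$ has singular values $1$ and $\|d\|_2\le1$, Lemma \ref{l3.1} supplies a completion $D_2$ making the rows of the matrix with rows $c,d,D_2$ a Parseval frame for $\mathbb{R}^N$, and Theorem \ref{t2.5} converts this into a compactly supported affine Parseval framelet, one of whose members is the required wavelet.

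I do not anticipate a genuine obstacle. The two steps that deserve a sentence of care are the measure-zero argument guaranteeing a good (indeed generic) choice of $\beta$, and the classical identification of the one-dimensional Vandermonde kernel with the vector $(1/p'(\xi_k))_k$; the rescaling at the end is simply what keeps the output within the Parseval-frame framework of Theorem \ref{t2.5} rather than merely producing a filter with the correct moments.
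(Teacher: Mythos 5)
Your proposal is correct and follows essentially the same route as the paper: choose $\beta$ so that the products $\beta\cdot n_k$ are pairwise distinct (your measure-zero argument on the sphere is the same ``finitely many hyperplanes cannot cover'' idea the paper phrases via Baire category), then reduce via Proposition \ref{p3.3} to a Vandermonde orthogonality condition and solve it; indeed your explicit kernel vector $d_k=1/\bigl(\sqrt{a_{n_k}}\,p'(\xi_k)\bigr)$ is exactly the last column of the paper's matrix $R^{-1}$. Your closing rescaling step, invoking Lemma \ref{l3.1} and Theorem \ref{t2.5} to place the filter inside an actual Parseval framelet, is a slightly more careful rendering of what the paper compresses into its appeal to Theorem \ref{t3.2}(a).
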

\begin{proof}
First, we claim that there always exists a vector $\beta\in\mathbb{R}^{s}$ such that all dot products 
\[\beta\cdot n_{k}, \quad k=1,\ldots, N\] 
are distinct. Equivalently, one can always find a $\beta$ such that $(n_{k}-n_{t})\cdot \beta\neq 0$ for all $k\neq t$. Indeed, to not have $(n_{k}-n_{t})\cdot\beta=0$ for some $\beta$ and for all $k\neq t$, we have to exclude $\binom{N}{2}$ hyperplanes from $\mathbb{R}^{s}$. However, by Baire's Category Theorem, $\mathbb{R}^{s}$ is not the union of a finite number of hyperplanes and hence uncountably many such $\beta$ vectors exist. Next, for such a $\beta\in\mathbb{R}^{s}$ we consider the $N\times N$ Vandermonde matrix
\[\mathcal{V}=\begin{pmatrix}1&\cdots&1\\
n_{1}\cdot\beta&\cdots&n_{N}\cdot\beta\\
(n_{1}\cdot\beta)^{2}&\cdots&(n_{N}\cdot\beta)^{2}\\
\vdots&\ddots&\vdots\\
(n_{1}\cdot\beta)^{N-1}&\cdots&(n_{N}\cdot\beta)^{N-1}
\end{pmatrix}\]
for which $\det(\mathcal{V})\neq0$, since all $\beta\cdot n_{k}$, $k=1,\ldots,N$ are distinct. Moreover, the matrix 
\[R:=\mathcal{V}\diag(\sqrt{a_{n_{1}}},\ldots,\sqrt{a_{n_{N}}})=\begin{pmatrix}\sqrt{a_{n_{1}}}&\cdots&\sqrt{a_{n_{N}}}\\
(n_{1}\cdot\beta)\sqrt{a_{n_{1}}}&\cdots&(n_{N}\cdot\beta)\sqrt{a_{n_{N}}}\\
(n_{1}\cdot\beta)^{2}\sqrt{a_{n_{1}}}&\cdots&(n_{N}\cdot\beta)^{2}\sqrt{a_{n_{N}}}\\
\vdots&\ddots&\vdots\\
(n_{1}\cdot\beta)^{N-1}\sqrt{a_{n_{1}}}&\cdots&(n_{N}\cdot\beta)^{N-1}\sqrt{a_{n_{N}}}
\end{pmatrix}\] 
is invertible, since $a_{n_{k}}\neq0$ and so the last column vector of $R^{-1}$ is orthogonal to all first $N-1$ rows of $R$. Therefore, by Proposition \ref{p3.3}, choosing $d$ to be the last column vector of $R^{-1}$ and applying Theorem \ref{t3.2}(a) implies that the corresponding wavelet $\psi$ has $N-1$ vanishing moments in the direction of $\beta$.
\end{proof}
\begin{remark}
Although we cannot expect the order of directional vanishing moments to exceed $N-1$, the previous proposition shows that there are uncountably many direction vectors $\beta$ for which this order of moments is realized. %$\beta$ may not be very useful since the two most important directions are the perpendicular and parallel ones to the prevailing orientation of $h_{i,1}$, when $h_{i,1}$ is a directional filter.
\end{remark}

\section{Examples}\label{s4}

As indicated in Sections \ref{s2} and \ref{s3}, the purpose of this work is to develop techniques to handcraft affine Parseval framelet sets, or at least handcraft the part of them which most significantly contributes to multidimensional image reconstructions.
In this section, we propose a four-step algorithmic process via which, for any high-pass filter 
\[H(\cdot)=\left(H_{1}(\cdot),\ldots,H_{L}(\cdot)\right)^{T}\in L_{2}^{L\times 1}(\mathbb{T}^{s})\] with components $H_{i}(\cdot)=\sum_{k=1}^{N}b_{n_{k}}^{i}e^{2\pi in_{k}\cdot}$, $i=1,\ldots,L$, one can force a Parseval framelet for $L_{2}(\mathbb{R}^{s})$ to comprise wavelets $\psi_{i}$ with corresponding high-pass filters (up to scalar multiplications). Using this algorithm, we construct classes of representative examples of explicit affine framelet sets containing atoms implementable by sparse filters with directional characteristics. The algorithm below can easily be applied to every finite set of high-pass filters of our choice, multiplied by an appropriate set of scalars.

Specifically, for $n_{k}\in J\subset\mathbb{Z}^{s}$, let $H_{0}$ be a low-pass filter with positive coefficients $a=(a_{n_{k}})_{k=1}^{N}$ and $H$ be any high-pass filter of the form 
\[H(\cdot)=\begin{pmatrix}b_{n_{1}}^{1}&\ldots&b_{n_{N}}^{1}\\
\vdots&&\vdots\\
b_{n_{1}}^{L}&\cdots&b_{n_{N}}^{L}\end{pmatrix}\begin{pmatrix}e^{2\pi in_{1}\cdot}\\\vdots\\ e^{2\pi in_{N}\cdot}\end{pmatrix}\]
with $H(0)=0$.
\begin{description}
\item[Step 1:] We define the $1\times N$ vector $c=\left(\sqrt{a_{n_{k}}}\right)_{k=1}^{N}$ and notice that for any $\lambda\in\mathbb{R}^{L}$, the matrix
\[D_{1}(\lambda)=\diag(\lambda)\begin{pmatrix}b_{n_{1}}^{1}&\ldots&b_{n_{N}}^{1}\\
\vdots&&\vdots\\
b_{n_{1}}^{L}&\cdots&b_{n_{N}}^{L}\end{pmatrix}\begin{pmatrix}1/c_{n_{1}}&&\\&\ddots&\\&&1/c_{n_{N}}\end{pmatrix}\]
is well-defined and $D_{1}(\lambda)c^{T}=0$, since $H$ is a high-pass filter and therefore satisfies $\sum_{k=1}^{N}b_{n_{k}}^{i}=0$ for all $i=1,\ldots,L$.
\item[Step 2:] We use Theorem \ref{t3.2}(a) to obtain $\lambda^{*}$ such that
\[\tr\left(c^{T}c+D_{1}(\lambda^{*})^{T}D_{1}(\lambda^{*})\right)=\begin{cases}\max\tr\left(c^{T}c+D_{1}(\lambda)^{T}D_{1}(\lambda)\right) \\\text{subject to }\left\Vert c^{T}c+D_{1}(\lambda)^{T}D_{1}(\lambda) \right\Vert\leq 1\end{cases}\]
\item[Step 3:] We use Lemma \ref{l3.1} to find a completion matrix $D_{2}$ for which the rows of 
\[\begin{pmatrix}c\\D_{1}(\lambda^{*})\\D_{2}\end{pmatrix}\in\mathbb{R}^{(v+1)\times N},\quad v\geq N-1,\]
form a Parseval frame for $\mathbb{R}^{N}$.
\item[Step 4:] We use Theorem \ref{t2.5} to guarantee that the wavelets $\psi_{i}$ with corresponding high-pass filters $\lambda_{i}^{*}H_{i}$, $i=1,\ldots,L$ are components of a multi-wavelet $\Psi$ whose associated family $X_{\Psi}$ is a Parseval framelet for $L_{2}(\mathbb{R}^{s})$. Indeed, this follows from Theorem \ref{t2.5}(a), since the high-pass filter matrix $B$ is obtained by
\[B=\begin{pmatrix}D_{1}(\lambda^{*})\\D_{2}\end{pmatrix}\begin{pmatrix}c_{n_{1}}&&\\&\ddots&\\&&c_{n_{N}}\end{pmatrix}\]
\end{description} 

\begin{remark} \label{mainS4remark}
\begin{enumerate}
\item The cost of incorporating into $\Psi$ the frame wavelets defined by $\lambda_{i}^{*}H_{i}$ is paid in part by having to incorporate into $\Psi$ the filters that come from $D_{2}$. This cost can only be controlled if we select multiple high pass filters of our choice for which we have $rank(Q)=N$. This particular process will become more clear in what follows. 

\item The previous algorithm demonstrates the potentially limited role of the refinable function in the construction of $H_{1}$. Specifically, the algorithm shows that its main part can come from $H$. As we see, as long as $H$ has enough hand-picked filters to exhaust the available dimensionality of the construction space $\mathbb{R}^{N}$, the $D_{2}$-contribution in the high-pass filter set $H_{1}$ may be limited as measured by the reconstruction error $E(\cdot)$. Consequently, we are led to the conclusion that the significance of the refinable function is limited as the only role its seems to play is to set $N$.
\end{enumerate}
\end{remark}
In the spirit of the previous remark, we introduce the typical models of high-pass filter designs of our choice, including high-pass filters acting as \emph{first and second order directional finite-difference, Prewitt and Sobel operators}, known to produce desirable results in edge and singularity detection in 2-D imaging applications.

We recall that first and second order directional finite-difference filters are associated with the operators $\delta_{h,u}$ and $\delta_{h,u}^{2}$, respectively, where 
\[\delta_{h,u}[f](\cdot)=f(\cdot+hu)-f(\cdot-hu),\]
and
\[\delta_{h,u}^{2}[f](\cdot)=f(\cdot+hu)-2f(\cdot)+f(\cdot-hu).\]
In one dimension, the corresponding filter matrices are $(1,0,-1)$ and $(1,-2,1)$ (see \cite{Ron_1997_4616}). Those are used to generate tensor product filters, such as the Prewitt and Sobel filters \cite{HAST201465} given by
\[P_{x}=\begin{pmatrix}-1&0&1\\-1&0&1\\-1&0&1\end{pmatrix}\quad P_{y}=\begin{pmatrix}-1&-1&-1\\0&0&0\\1&1&1\end{pmatrix}\]
and
\[S_{x}=\begin{pmatrix}1&0&-1\\2&0&-2\\1&0&-1\end{pmatrix}\quad S_{y}=\begin{pmatrix}1&2&1\\0&0&0\\-1&-2&-1\end{pmatrix},\]
respectively. Both the Prewitt and Sobel operators are used to approximate or detect horizontal and vertical intensity changes. They are obtained as tensor products of smoothing and finite-difference operators, hence they are separable. We are interested in directing the action of such operators to several orientations to promote sparse decompositions and use them in feature extraction applications. For example, we notice that the matrices 
\begin{align*}
\begin{pmatrix}
0&0&0&1&0\\
0&0&0&0&0\\
0&0&0&0&0\\
0&0&0&0&0\\
0&-1&0&0&0
\end{pmatrix}
&\quad
\begin{pmatrix}
0&0&0&0&0\\
0&0&0&0&1\\
0&0&-2&0&0\\
1&0&0&0&0\\
0&0&0&0&0
\end{pmatrix}\\
\begin{pmatrix}
0&1&0&0&0\\
-1&0&1&0&0\\
0&-1&0&1&0\\
0&0&-1&0&1\\
0&0&0&-1&0
\end{pmatrix}
&\quad
\begin{pmatrix}
-1&0&1&0&0\\
0&0&0&0&0\\
0&-2&0&2&0\\
0&0&0&0&0\\
0&0&-1&0&1
\end{pmatrix}
\end{align*}
are sparse and oriented at $63.43^{\circ}$, $26.57^{\circ}$, $135^{\circ}$ and $116.57^{\circ}$, respectively, but cannot be obtained as tensor products of one-dimensional kernels. This is where our algorithm comes in handy, since it permits filters like the above to be part of filter families inducing Parseval framelets. Next, we construct families of wavelet frames arising from Cardinal $B$-spline refinable functions, whose low-pass filters have positive coefficients. 

For $N_{1}N_{2}=N$, let $h$ be an $N_{1}\times  N_{2}$ filter matrix. We define the map $\Lambda:\mathbb{R}^{N_{1}\times N_{2}}\to\mathbb{R}^{N}$ given by
\[\Lambda(h)=\left(h_{N_{1},1},\ldots,h_{N_{1},N_{2}},h_{N_{1}-1,1},\ldots,h_{N_{1}-1,N_{2}},\ldots,h_{1,1},\ldots,h_{1,N_{2}}\right)\in\mathbb{R}^{N}\]
to turn $h$ from a matrix to a vector, in accordance to Theorem \ref{t2.5}. As will become clear in examples \ref{ex2},\ref{ex3} and \ref{ex4}, we use $\Lambda$ in the following way: first, we pre-specify the form of a desirable high-pass filter matrix, say $h$, and then we define
\[d(\lambda):=\lambda \left(\frac{\Lambda(h)_{k}}{c_{n_{k}}}\right)_{k=1}^{N}\]
for a given vector $c=(c_{n_{k}})_{k=1}^{N}$. We then apply Steps 2,3 and 4 of our algorithm as stated above. When we do this for more than one filter $h$, then we must solve the optimization problem of Theorem \ref{t3.2}(a). If the filters we intend to use give pairwise orthogonal vectors through $\Lambda$, then the steps of the algorithm presented above can be applied to each filter individually.

The first case we examine is a high-pass filter family arising when we only apply Lemma \ref{l3.1} and Theorem \ref{t2.5}. In other words, we do not pre-design any of the filters.

\begin{example}
\label{ex1}
Let $\varphi$ be the one-dimensional second order cardinal $B$-spline refinable function with corresponding low-pass filter
\[\mu_{0}(\gamma)=\left(\frac{1+e^{2\pi i\gamma}}{2}\right)^{2}=\frac{1}{4}\left(1+2e^{2\pi i\gamma}+e^{4\pi i\gamma}\right), \quad \gamma\in\mathbb{T}\]
and consider $\phi$ to be the tensor product refinable function $\varphi\otimes\varphi$. Then $H_{0}(\gamma)=\mu_{0}(\gamma_{1})\mu_{0}(\gamma_{2})$ for $\gamma=(\gamma_{1},\gamma_{2})\in\mathbb{T}^{2}$ and the low-pass filter matrix is given by
\[h_{0}=\frac{1}{16}\begin{pmatrix}1&2&1\\2&4&2\\1&2&1\end{pmatrix}.\]
Using $\Lambda$, we define
\[c=\frac{1}{4}\left(1,\sqrt{2},1,\sqrt{2},2,\sqrt{2},1,\sqrt{2},1\right).\]
For symmetry purposes we translate $\phi$ so as to obtain $J=\{-1,0,1\}\times\{-1,0,1\}$. If we merely apply the SVD method of Lemma \ref{l3.1} we obtain
\begingroup\makeatletter\def\f@size{10}\check@mathfonts
\[B=10^{-2}\begin{pmatrix}
-8.84&31.8&-1.77&-3.54&-7.07&-3.54&-1.77&-3.54&-1.77\\ 
-6.25&-2.5&23.8&-2.5&-5&-2.5&-1.25&-2.5&-1.25\\ 
-8.84&-3.54&-1.77&31.8&-7.07&-3.54&-1.77&-3.54&-1.77\\ 
-12.5&-5&-2.5&-5&40&-5&-2.5&-5&-2.5\\ 
-8.84&-3.54&-1.77&-3.54&-7.07&31.8&-1.77&-3.54&-1.77\\ 
-6.25&-2.5&-1.25&-2.5&-5&-2.5&23.8&-2.5&-1.25\\ 
-8.84&-3.54&-1.77&-3.54&-7.07&-3.54&-1.77&31.8&-1.77\\ 
-6.25&-2.5&-1.25&-2.5&-5&-2.5&-1.25&-2.5&23.8\\ 
\end{pmatrix}\]
\endgroup
We notice that the fifth column of $B$ contains the constant terms in the generated high-pass filter polynomials. Based on this observation, we note that even though Theorem \ref{t2.5} guarantees that $B$ induces a Parseval frame for $L_{2}(\mathbb{R}^{2})$, none of the high-pass filter matrices are sparse, symmetric, anti-symmetric, or directional.
\end{example}

SVD for the construction of the high-pass filter set was first used in \cite{GOH2008} for proving the existence of periodic tight frame multiwavelets $L_2 ([0,2\pi)^s)$ arising from multi-refinable periodic functions. As we see, apart from generating compactly supported frame wavelets, there is essentially no luck in obtaining filters with some of the desirable properties by using SVD only.
 
\begin{example}
\label{ex2}
Let $\varphi$ be an even-order cardinal $B$-spline refinable function and let $\phi$ be the tensor product $\varphi\otimes\varphi$ as before, centered at the origin. Using $\Lambda$ and the fact that the symmetry of $h_{0}$ implies $a_{n_{i}}=a_{n_{N-i+1}}$ for $i=1,\ldots,(N-1)/2$, we define
\[Q=\begin{pmatrix}c\\ D_{1}\end{pmatrix}=\begin{pmatrix}
\sqrt{a_{n_{1}}}&\cdots&\sqrt{a_{n_{(N-1)/2}}}&\sqrt{a_{n_{(N+1)/2}}}&\sqrt{a_{n_{(N-1)/2}}}&\cdots&\sqrt{a_{n_{1}}}\\
-\frac{\sqrt{2}}{2}&\cdots&0&0&0&\cdots&\frac{\sqrt{2}}{2}\\
\vdots&&\vdots&\vdots&\vdots&&\vdots\\
0&\cdots&-\frac{\sqrt{2}}{2}&0&\frac{\sqrt{2}}{2}&\cdots&0
\end{pmatrix}.\]
We notice that $D_{1}$ defines central-difference filters with orientations parallel to the vectors $n_{i}$, $i=1,\ldots (N-1)/2$.
If $\beta$ is an arbitrary unit vector in $\mathbb{R}^{2}$, then we write 
\[cZ=\left((n_{1}\cdot\beta)\sqrt{a_{n_{1}}},\ldots,(n_{(N+1)/2}\cdot\beta)\sqrt{a_{n_{(N+1)/2}}},\ldots,(n_{N}\cdot\beta)\sqrt{a_{n_{1}}}\right)\]
as in Proposition \ref{p3.3} and note that the symmetry of the vectors $n_{i}$ and $n_{N-i+1}$ about the origin implies
\[n_{i}\cdot\beta=-n_{N-i+1}\cdot\beta,\quad i=1,\ldots,\frac{N-1}{2}.\]
This means that if a vector belongs to the orthogonal complement of the linear span of the rows of $Q$, then it is automatically orthogonal to $cZ$. In this setting, the rows of $Q$ are pairwise orthogonal unit vectors. Any choice of a $D_{2}$ matrix for which the rows of \[\begin{pmatrix}Q\\D_{2}\end{pmatrix}\]
form a Parseval frame for $\mathbb{R}^{N}$ will define an affine Parseval framelet for $L_{2}(\mathbb{R}^{2})$, where the $\psi_{i}$ defined by the rows of $D_{2}$ have exactly one directional vanishing moment for all $\beta\in\mathbb{R}^{2}$.

By Proposition \ref{p3.3}, each of the high-pass filters generated by $Q$ makes its corresponding wavelet insensitive to singularities parallel to $\beta$ when $\beta$ is perpendicular to $n_{k}$, since then the wavelet has infinite moments along these directions. In fact, by continuity of the inner product, each wavelet loses its sensitivity as $\beta$ converges to the unit vector perpendicular to $n_{k}$.
\end{example}

\begin{example}
\label{ex3}
Starting with the same refinable function $\phi$ as in example \ref{ex1}, our next effort is to design $B$ so that it is associated with four first-order and four second-order directional finite-difference high-pass filter matrices. Specifically, we consider the matrices
\begingroup\makeatletter\def\f@size{10}\check@mathfonts
\begin{align*}
&h_{1}=\begin{pmatrix}0&0&1\\0&0&0\\-1&0&0\end{pmatrix}, h_{2}=\begin{pmatrix}0&1&0\\0&0&0\\0&-1&0\end{pmatrix}, h_{3}=\begin{pmatrix}1&0&0\\0&0&0\\0&0&-1\end{pmatrix}, h_{4}=\begin{pmatrix}0&0&0\\-1&0&1\\0&0&0\end{pmatrix}\\
&h_{5}=\begin{pmatrix}0&0&1\\0&-2&0\\1&0&0\end{pmatrix}, h_{6}=\begin{pmatrix}0&1&0\\0&-2&0\\0&1&0\end{pmatrix}, h_{7}=\begin{pmatrix}1&0&0\\0&-2&0\\0&0&1\end{pmatrix}, h_{8}=\begin{pmatrix}0&0&0\\1&-2&1\\0&0&0\end{pmatrix},
\end{align*} 
\endgroup
which we vectorize using the map $\Lambda$ to obtain the rows of $D_{1}(\lambda)$ given by $d_{k}(\lambda)$, $k=1,\ldots,8$. This gives the matrix
\[D_{1}(\lambda):=\diag(\lambda)\begin{pmatrix}
-4&0&0&0&0&0&0&0&4\\
0&-2\sqrt{2}&0&0&0&0&0&2\sqrt{2}&0\\
0&0&-4&0&0&0&4&0&0\\
0&0&0&-2\sqrt{2}&0&2\sqrt{2}&0&0&0\\
0&0&0&-2\sqrt{2}&4&-2\sqrt{2}&0&0&0\\
0&0&-4&0&4&0&-4&0&0\\
0&-2\sqrt{2}&0&0&4&0&0&-2\sqrt{2}&0\\
-4&0&0&0&4&0&0&0&-4
\end{pmatrix}\]
whose rows are in the orthogonal complement of $c$. Here the rows of $D_{1}(\cdot)$ are not pairwise orthogonal and so the largest singular value of
\[Q(\lambda)=\begin{pmatrix}c\\D_{1}(\lambda)\end{pmatrix}\]
is expected to be strictly greater than $1$, even in the case where the rows of $Q$ are normalized. At this point, we invoke Theorem \ref{t3.2}(a). Specifically, we can find an optimal $\lambda^{*}$ so that $D_{1}(\lambda^{*})$ is a solution to
\[\begin{cases}\max\tr\left(c^{T}c+D_{1}^{T}(\lambda)D_{1}(\lambda)\right) \\ \text{subject to }\left\Vert c^{T}c+D_{1}^{T}(\lambda)D_{1}(\lambda)\right\Vert\leq 1\end{cases}.\]
We use Matlab's built-in function \emph{fmincon} to solve this problem and obtain
\[\lambda^{*}=\left(0.0442,0.0884,0.0442,0.0884,0.0234,0.0293,0.0088,0.0316\right),\]
but also the high-pass filter coefficients
\begingroup\makeatletter\def\f@size{10}\check@mathfonts
\[B=10^{-2}\begin{pmatrix}-17.7&0&0&0&0&0&0&0&17.7\\ 
0&-25&0&0&0&0&0&25&0\\ 
0&0&-17.7&0&0&0&17.7&0&0\\ 
0&0&0&-25&0&25&0&0&0\\ 
0&0&0&-6.63&13.26&-6.63&0&0&0\\ 
0&0&-11.75&0&23.5&0&-11.75&0&0\\ 
0&-2.5&0&0&5&0&0&-2.5&0\\ 
-12.65&0&0&0&25.3&0&0&0&-12.65\\ 
0.002&0&0.001&0.0003&-0.008&0.0003&0.001&0&0.002\\ 
-8.52&0.0288&9.59&0.233&-2.66&0.233&9.59&0.0288&-8.52\\ 
5.46&-0.939&5.69&-19&17.5&-19&5.69&-0.939&5.46\\ 
3.39&-21.5&3.4&8.1&13.2&8.1&3.4&-21.5&3.39\\ 
\end{pmatrix}\]
\endgroup
by Lemma \ref{l3.1} and Theorem \ref{t2.5}. The SVD process of Lemma \ref{l3.1} introduces four new filters, from the lower four rows of $B$, in order to complete the Parseval frame for $\mathbb{R}^{9}$. Moreover, as shown in example \ref{ex2}, the wavelets induced by the rows $\{b_{i}\}_{i=5}^{13}$ have first-order directional vanishing moments in the direction of all $\beta\in\mathbb{R}^{2}$. If we decide to omit the four filters added by $D_{2}$, Theorem \ref{t3.2}(b) implies that for an arbitrary function $f\in L_{2}(\mathbb{R}^{2})$, we have 
\[E(f)\leq(1-\sigma_{9}^{2})\Vert f\Vert_{L_{2}}^{2}\approx 0.987\Vert f\Vert_{L_{2}}^{2}.\]
Additionally, by Theorem \ref{t3.2}(b), the family
\[\{D_{2}^{j}T_{k}\psi_{i}: j\in\mathbb{Z},k\in\mathbb{Z}^{s},i=1,\ldots,8\}\]
is a frame, which guarantees the representation's injectivity. We also point out that, if all the row-vectors of $D_{1}(\lambda)$ are pairwise orthogonal, then the optimal $\lambda^{*}$ gives $\sigma_{i}(D_{1}(\lambda^{*}))=1$ for all $i$. The reader may refer to \cite{karSpie} for a Parseval framelet induced by the first five rows of $D_{1}(\lambda)$. In that paper we also present an application of the high-pass filter matrices arising from rows $3,4,5$ and $6$ of $B$ given by
\begin{align*}
&h_{3}=10^{-2}\begin{pmatrix}17.7&0&0\\ 
0&0&0\\ 
0&0&-17.7\\ 
\end{pmatrix}&h_{4}=10^{-2}\begin{pmatrix}0&0&0\\ 
-25&0&25\\ 
0&0&0\\ 
\end{pmatrix}\\
&h_{4}=10^{-2}\begin{pmatrix}0&0&0\\ 
-6.63&13.26&-6.63\\ 
0&0&0\\ 
\end{pmatrix}&h_{6}=10^{-2}\begin{pmatrix}-11.75&0&0\\ 
0&23.5&0\\ 
0&0&-11.75\\ 
\end{pmatrix}
\end{align*}
\begin{figure}[H] 	
\centering
\includegraphics[width=0.27\textwidth,height=0.27\textwidth]{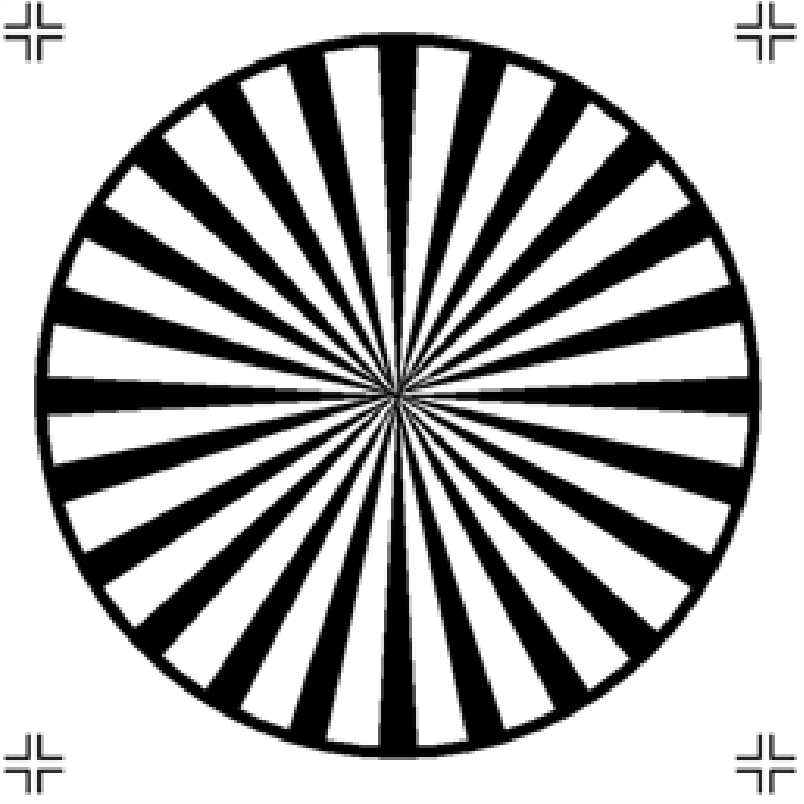}
\caption{ This is a 256x256 image freely available with Matlab 2017. We use it to demonstrate the interaction of the designed filters with singularities in various directions.}
\end{figure}

\begin{figure}[H]
\centering
\includegraphics[width=0.3\textwidth,height=0.23\textwidth]{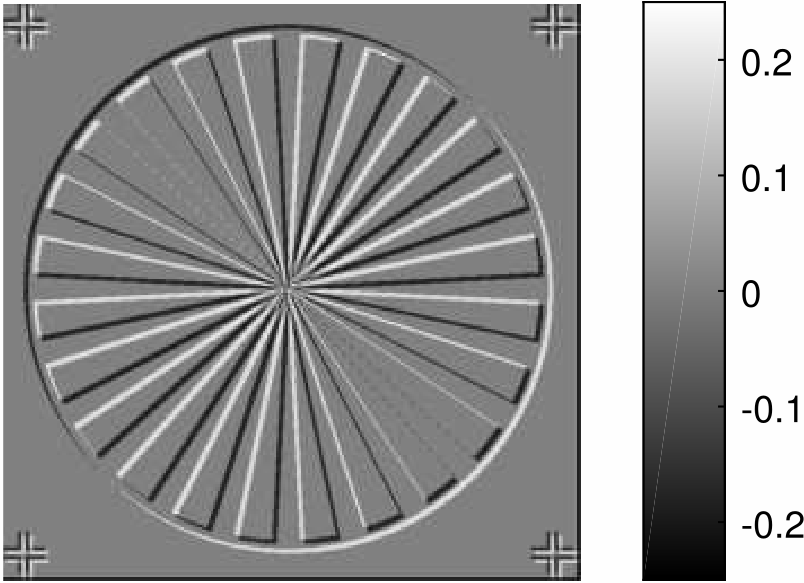}\qquad\qquad\qquad
\includegraphics[width=0.3\textwidth,height=0.23\textwidth]{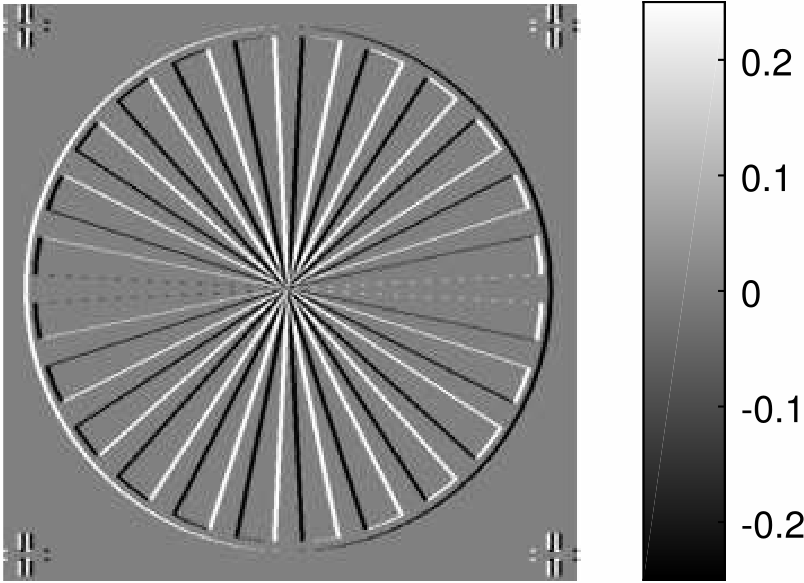}\end{figure}
\medskip
\medskip
\begin{figure}[H]
\centering
\includegraphics[width=0.3\textwidth,height=0.23\textwidth]{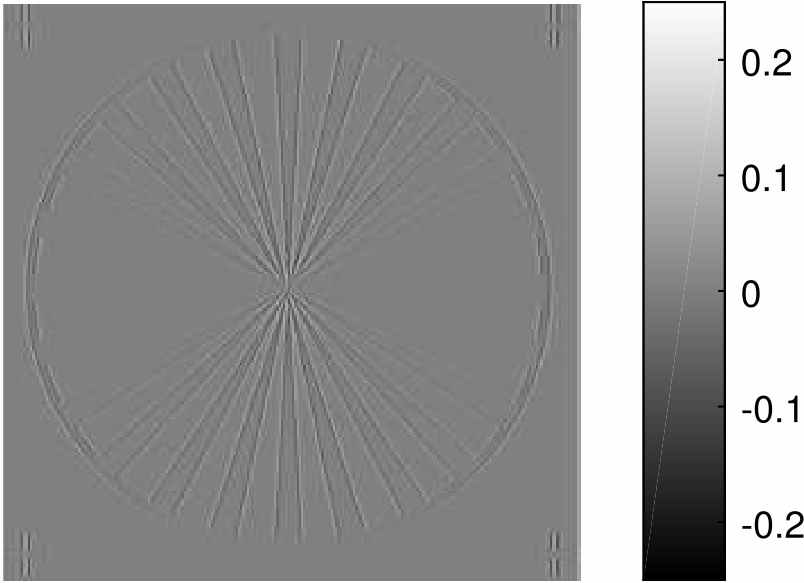}\qquad\qquad\qquad
\includegraphics[width=0.3\textwidth,height=0.23\textwidth]{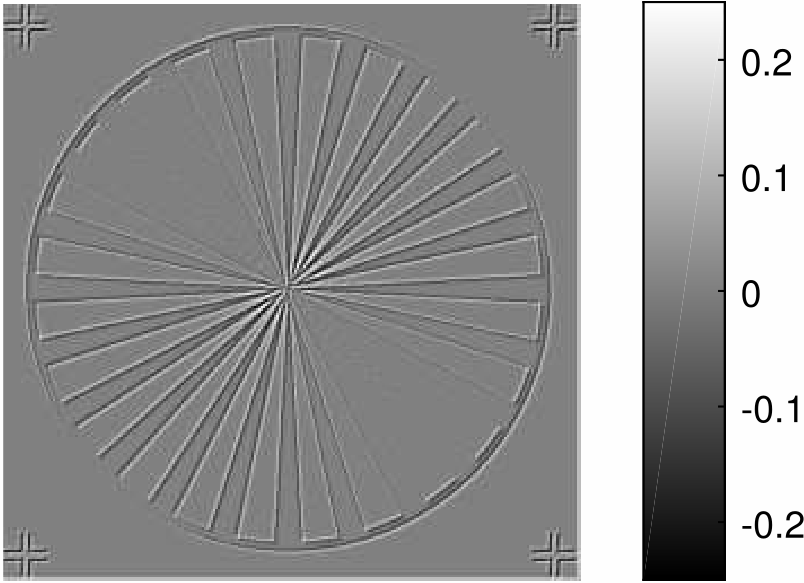}
\caption{Application of $h_{i}$, $i=3,4,5,6$ constructed in Example \ref{ex3} as discrete 2D-convolution kernels at native reolution. The first two filters act as first-order directional central-difference filters oriented at $135^{\circ}$ and $0^{\circ}$, respectively. The last two act as second-order central-difference filters oriented at $0^{\circ}$ and $135^{\circ}$, respectively. Note that singularity detection strength increases as edges are oriented closer to being perpendicular to the orientation of each filter. In Fig. \ref{fig5} we see that this effect may also be related to the anisotropy of the filter and its size.}
\end{figure}
\end{example}

\begin{example}
\label{ex4}
We consider the fourth order cardinal $B$-spline refinable function 
\[\varphi(x)=\begin{cases}\frac{1}{6}x^{3},& 0<x\leq 1\\
\frac{1}{6}(-x^{3}+12x^{2}-18x+8),& 1<x\leq 2\\
\frac{1}{6}(-x^{3}-12x^{2}+78x-88),& 2<x\leq 3\\
\frac{1}{6}(x^{3}-48x+128),& 3<x\leq 4\end{cases}\]
with corresponding low-pass filter
\[\mu_{0}(\gamma)=\left(\frac{1+e^{2\pi i\gamma}}{2}\right)^{4}=\frac{1}{16}\left(1+4e^{2\pi i\gamma}+6e^{4\pi i\gamma}+4e^{6\pi i\gamma}+e^{8\pi i\gamma}\right),\]
and we set $\phi$ to be the tensor product $\varphi\otimes\varphi$. Then $H_{0}(\gamma)=\mu_{0}(\gamma_{1})\mu_{0}(\gamma_{2})$, the low-pass filter matrix is given by
\[h_{0}=\frac{1}{64}\begin{pmatrix}
1&4&6&4&1\\4&16&24&16&4\\6&24&36&24&6\\4&16&24&16&4\\1&4&6&4&1
\end{pmatrix}\]
and $c$ takes the form
\[c=\frac{1}{16}\left(1,2,\sqrt{6},2,1,2,4,2\sqrt{6},4,2,\sqrt{6},2\sqrt{6},6,2\sqrt{6},\sqrt{6},2,4,2\sqrt{6},4,2,1,2,\sqrt{6},2,1\right).\]
Centering $\phi$ at the origin implies $J=\{-2,\ldots,2\}\times\{-2,\ldots,2\}$. We use our algorithm to create filters with different orientations from those along which their corresponding finite-difference kernels act. More specifically, we consider first and second-order filters of the form
\begin{align*}
&\begin{pmatrix}0&0&0&-1&0\\ 
0&0&-1&0&1\\ 
0&-1&0&1&0\\ 
-1&0&1&0&0\\ 
0&1&0&0&0\\ 
\end{pmatrix}&\begin{pmatrix}0&0&-1&0&1\\ 
0&0&0&0&0\\ 
0&-1&0&1&0\\ 
0&0&0&0&0\\ 
-1&0&1&0&0\\ 
\end{pmatrix}\\
&\begin{pmatrix}0&-1&0&1&0\\ 
0&-1&0&1&0\\ 
0&-1&0&1&0\\ 
0&-1&0&1&0\\ 
0&-1&0&1&0\\ 
\end{pmatrix}&\begin{pmatrix}-1&0&1&0&0\\ 
0&0&0&0&0\\ 
0&-1&0&1&0\\ 
0&0&0&0&0\\ 
0&0&-1&0&1\\ 
\end{pmatrix}\\
&\begin{pmatrix}0&0&0&1&-1\\ 
0&0&1&-2&1\\ 
0&1&-2&1&0\\ 
1&-2&1&0&0\\ 
-1&1&0&0&0\\ 
\end{pmatrix}&\begin{pmatrix}0&0&1&-2&1\\ 
0&0&0&0&0\\ 
0&1&-2&1&0\\ 
0&0&0&0&0\\ 
1&-2&1&0&0\\ 
\end{pmatrix}\\
&\begin{pmatrix}0&1&-2&1&0\\ 
0&1&-2&1&0\\ 
0&1&-2&1&0\\ 
0&1&-2&1&0\\ 
0&1&-2&1&0\\ 
\end{pmatrix}&\begin{pmatrix}1&-2&1&0&0\\ 
0&0&0&0&0\\ 
0&1&-2&1&0\\ 
0&0&0&0&0\\ 
0&0&1&-2&1\\ 
\end{pmatrix}.
\end{align*}
First, with this new design approach we mimic one of the popular properties of curvelets and shearlets: We define filters that act as singularity detectors perpendicularly to the local orientation of a wavefront. Since our design is limited within $J$, the discreteness of this spatially limited integer subgrid constrains our ability to direct the action of the associated differential operator perpendicularly to the filter's orientation. Moreover, the smaller number of bands of the filter matrix relative to the length along its orientation seems to better focus the direction of its action (see Fig. \ref{fig5}). This is something we also observe to a greater degree with shearlets and curvelets, because they are designed in the frequency domain where one can control their shape more easily.

The prototype of each of the two classes of the filters we design in this example is directed along the $x$ or $y$ axis. The third and seventh matrices above are the prototype filters for the first and second order directional central difference operators acting along the $x$ direction. Both filters have vertical orientation. To switch these filters to another orientation, we reposition their 
 central band by selecting one-by-one the lead point of the central band on the $x$ and $y$-axis of the grid as shown in Figure \ref{reposition} below.

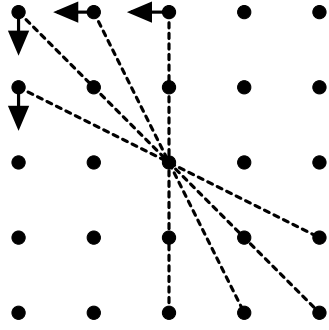
\begin{figure}[H]
\centering
\begin{tikzpicture}[line cap=round,line join=round,>=triangle 45,x=1.0cm,y=1.0cm]
\clip(-2.5,-2.5) rectangle (2.5,2.5);
\draw [line width=1.2pt,dash pattern=on 2pt off 2pt] (0.,2.)-- (0.,-2.);
\draw [->,line width=1.2pt] (0.,2.) -- (-0.56,2.);
\draw [line width=1.2pt,dash pattern=on 2pt off 2pt] (-1.,2.)-- (1.,-2.);
\draw [->,line width=1.2pt] (-1.,2.) -- (-1.54,2.);
\draw [line width=1.2pt,dash pattern=on 2pt off 2pt] (-2.,2.)-- (2.,-2.);
\draw [->,line width=1.2pt] (-2.,2.) -- (-2.,1.42);
\draw [line width=1.2pt,dash pattern=on 2pt off 2pt] (-2.,1.)-- (2.,-1.);
\draw [->,line width=1.2pt] (-2.,1.) -- (-2.,0.42);
\begin{scriptsize}
\draw [fill=black] (0.,1.) circle (2.5pt);
\draw [fill=black] (1.,1.) circle (2.5pt);
\draw [fill=black] (-2.,-1.) circle (2.5pt);
\draw [fill=black] (-2.,0.) circle (2.5pt);
\draw [fill=black] (-2.,-2.) circle (2.5pt);
\draw [fill=black] (1.,-1.) circle (2.5pt);
\draw [fill=black] (2.,-1.) circle (2.5pt);
\draw [fill=black] (0.,0.) circle (2.5pt);
\draw [fill=black] (1.,0.) circle (2.5pt);
\draw [fill=black] (1.,2.) circle (2.5pt);
\draw [fill=black] (0.,2.) circle (2.5pt);
\draw [fill=black] (-1.,1.) circle (2.5pt);
\draw [fill=black] (-1.,2.) circle (2.5pt);
\draw [fill=black] (-1.,0.) circle (2.5pt);
\draw [fill=black] (-2.,1.) circle (2.5pt);
\draw [fill=black] (0.,-2.) circle (2.5pt);
\draw [fill=black] (1.,-2.) circle (2.5pt);
\draw [fill=black] (2.,0.) circle (2.5pt);
\draw [fill=black] (2.,1.) circle (2.5pt);
\draw [fill=black] (2.,2.) circle (2.5pt);
\draw [fill=black] (-2.,2.) circle (2.5pt);
\draw [fill=black] (-1.,-2.) circle (2.5pt);
\draw [fill=black] (2.,-2.) circle (2.5pt);
\draw [fill=black] (-1.,-1.) circle (2.5pt);
\draw [fill=black] (0.,-1.) circle (2.5pt);
\end{scriptsize}
\end{tikzpicture}
\caption{The dashed lines show four successive positions of central bands defining this pre-designed filter set. Once the central band has been set, we choose its nearest diametrically opposite bands to create all first and second-order finite difference filters allowed by this process.}
\label{reposition}
\end{figure} 

This process gives a filter bank with $24$ high pass filters with hand-picked orientations. Next, SVD adds $24$ more to complete a Parseval frame. The full list of all 48 filters of this example and of Example \ref{ex3} can be found in the supplementary file which can be retrieved from \url{https://github.com/nkarantzas/multi-d-compactly-supported-PF-} along with the codes used for the generation of the presented filter-banks.
\medskip
\medskip
\begin{figure}[H]
\centering
\includegraphics[width=0.3\textwidth,height=0.22\textwidth]{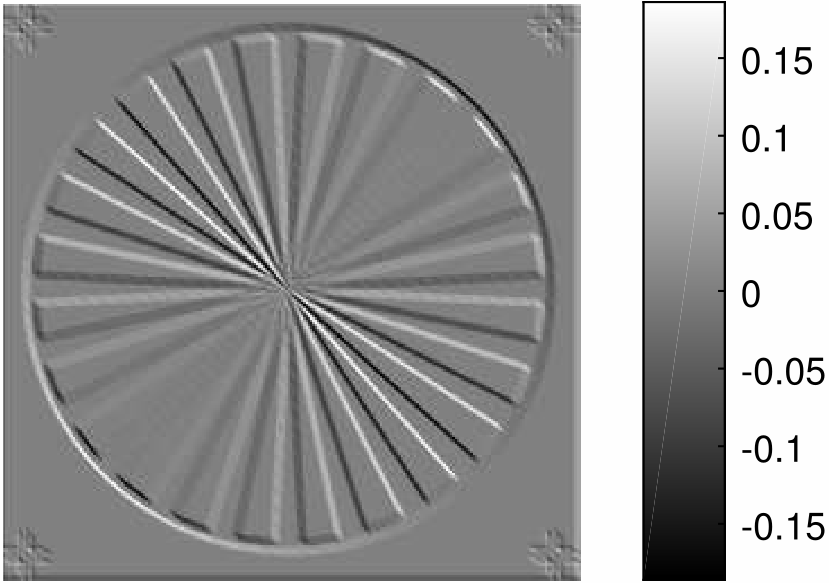}\qquad\qquad\qquad
\includegraphics[width=0.3\textwidth,height=0.22\textwidth]{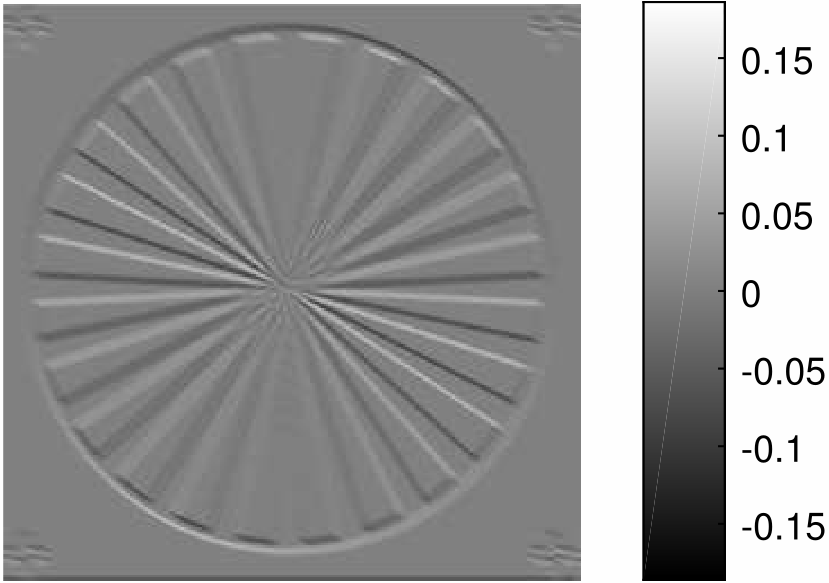}
\end{figure}
\medskip
\medskip
\begin{figure}[H]
\centering
\includegraphics[width=0.3\textwidth,height=0.22\textwidth]{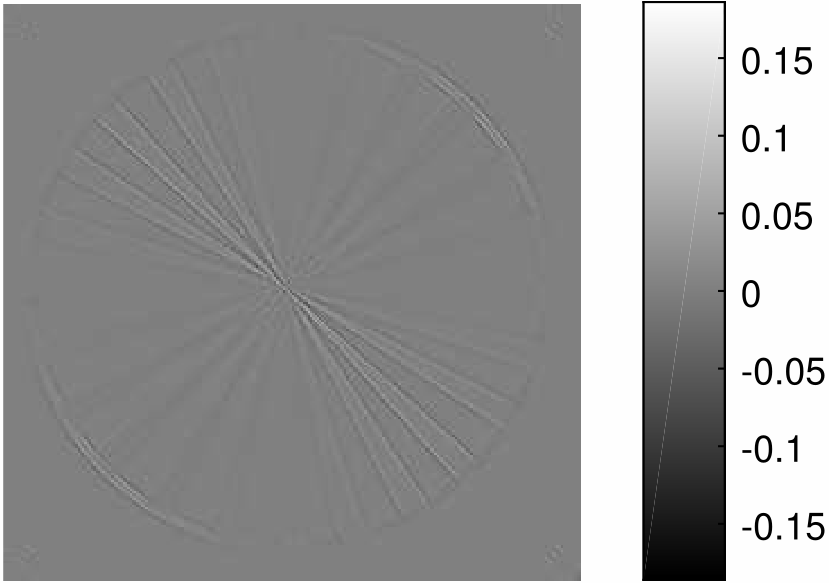}\qquad\qquad\qquad
\includegraphics[width=0.3\textwidth,height=0.22\textwidth]{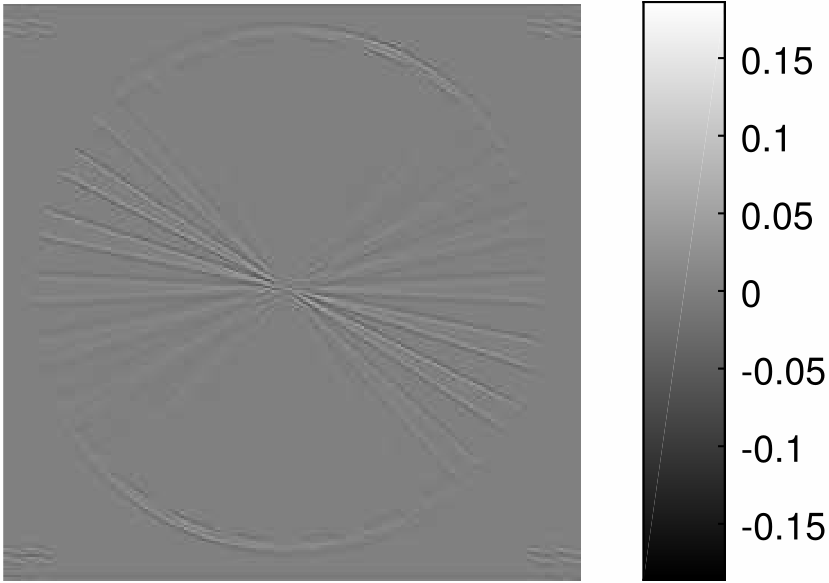}
\caption{Application of $h_{i}$, $i=5,6,17,18$ at native resolution. The first two convolutions correspond to filters with orientations at $135^{\circ}$ and $153.43^{\circ}$, respectively. The last two convolutions correspond to filters with orientations at $135^{\circ}$ and $153.43^{\circ}$, respectively.}
\label{fig5}
\end{figure}
\end{example}

\begin{example}
\label{ex5} 
As promised in Section \ref{s2}, we illustrate the geometric implications and complexities of solving the system of equations \eqref{4} and \eqref{5}. Equation \eqref{5} is relevant only when $M$ is not a diagonal matrix. Recall that our analysis in Sections \ref{s2} and \ref{s3} is based on $M$ being diagonal. To avoid computational complications, we consider the one-dimensional case, i.e., $s=1$. Without loss of generality, we assume $\{n_{k}\}_{k=1}^{N}$ are consecutive integers. Then
\[\begin{cases}
e^{2\pi i(n_{2}-n_{1})\gamma}=e^{2\pi i(n_{3}-n_{2})\gamma}=\ldots=e^{2\pi i(n_{N}-n_{N-1})\gamma}\\
e^{2\pi i(n_{3}-n_{1})\gamma}=e^{2\pi i(n_{4}-n_{2})\gamma}=\ldots=e^{2\pi i(n_{N}-n_{N-2})\gamma}\\
\vdots\\
e^{2\pi i(n_{N-1}-n_{1})\gamma}=e^{2\pi i(n_{N}-n_{2})\gamma}.
\end{cases}\]
The above equalities indicate that by rearranging and regrouping the monomials in \eqref{5} with respect to a fixed-valued $n_{t}-n_{k}$, we conclude that equation \eqref{5} is satisfied if and only if 
\[\sum_{k=1}^{N-t}m_{k,k+t}e^{-2\pi in_{k+t}q}=0,\]
for all $t=1,\ldots,N-1$, which along with equation \eqref{4} give a full characterization of the problem.

However, even though the above equation indicates there is a relationship between the elements of the $j$-th off-diagonal of the matrix $M$, it does not provide us with any insight on the dimension of the desired high-pass vector, or a definite way of acquiring it.

For example, in the setting of the classical construction of orthonormal wavelets, let $H_{0}$ be a low-pass filter with $4$ coefficients given by $a=[a_{1},a_{2},a_{3},a_{4}]$ and $H_{1}$ be a high-pass filter with coefficients $B=[b_{1},b_{2},b_{3},b_{4}]$. Since $M=a^{T}a+B^{T}B$ is symmetric, the previous system of equations is equivalent to 
\[\begin{cases}
m_{11}+m_{22}+m_{33}+m_{44}=1,\\
m_{11}-m_{22}+m_{33}-m_{44}=0,\\
m_{12}+m_{23}+m_{34}=0,\\
m_{12}-m_{23}+m_{34}=0,\\
m_{13}+m_{24}=0,\\
m_{13}-m_{24}=0,\\
m_{14}=0,
\end{cases}\]
from which we deduce $m_{13}=m_{14}=m_{23}=m_{24}=0$ and $m_{12}=-m_{34}$. Now let $v_{k}\in\mathbb{R}^{2}$, $k=1,2,3,4$ be the column vectors of
\[\begin{pmatrix}a\\B\end{pmatrix}=\begin{pmatrix}a_{1}&a_{2}&a_{3}&a_{4}\\b_{1}&b_{2}&b_{3}&b_{4}\end{pmatrix}.\]
Then the above linear system suggests
\begin{itemize}
\item $v_{1}$ is orthogonal to $v_{3}$ and $v_{4}$, and $v_{2}$ is orthogonal to $v_{3}$ and $v_{4}$. Hence $m_{12}\neq 0$, $v_{1}\parallel v_{2}$ and $v_{3}\parallel v_{4}$.
\item Finally, since $m_{12}=-m_{34}$, if $v_{1}$ and $v_{2}$ are parallel, $v_{3}$ and $v_{4}$ must be anti-parallel and vice versa.
\end{itemize}
This analysis indicates that the vectors $v_{k}$ can only form a capital T-shaped configuration as indeed they do, for example in the Daubechies $D4$ case \cite{Daub1} where the corresponding matrix $\begin{pmatrix}a\\B\end{pmatrix}$ is given by 
\[\begin{pmatrix}a\\B\end{pmatrix}=\frac{1}{8}\begin{pmatrix}1+\sqrt{3}&3+\sqrt{3}&3-\sqrt{3}&1-\sqrt{3}\\1-\sqrt{3}&\sqrt{3}-3&3+\sqrt{3}&-1-\sqrt{3}\end{pmatrix}.\] 
\begin{figure}[H]
\centering
\begin{tikzpicture}[scale=4.0]
\draw[->,color=black] (-0.2996,0.) -- (1.,0.);
\foreach \x in {-0.2,0.2,0.4,0.6,0.8}
\draw[shift={(\x,0)},color=black] (0pt,-2pt);
\draw[color=black] (0.9521574193548397,0.0116914485165794) node [anchor=south west] { x};
\draw[->,color=black] (0.,-0.5) -- (0.,0.85);
\foreach \y in {-0.4,-0.2,0.2,0.4,0.6,0.8}
\draw[shift={(0,\y)},color=black] (-2pt,0pt);
\draw[color=black] (0.01504516129032259,0.8230970331588128) node [anchor=west] { y};
\clip(-0.2996,-0.5) rectangle (1.,0.85);
\draw [->,line width=1.2pt] (0.,0.) -- (0.4829629131445341,-0.12940952255126034);
\draw [->,line width=1.2pt] (0.,0.) -- (0.8365163037378077,-0.2241438680420134);
\draw [->,line width=1.2pt] (0.,0.) -- (0.2241438680420134,0.8365163037378077);
\draw [->,line width=1.2pt] (0.,0.) -- (-0.12940952255126034,-0.4829629131445341);
\draw (0.16,-0.1) node[anchor=north west] {$v_{1}$};
\draw (0.6,-0.23) node[anchor=north west] {$v_{2}$};
\draw (0.22,0.78) node[anchor=north west] {$v_{3}$};
\draw (-0.28,-0.35) node[anchor=north west] {$v_{4}$};
\end{tikzpicture}
\label{fig:D4}
\end{figure}
Finally, we notice that if one wants to have additional high-pass filters or increase the length of the filters, the number of degrees of freedom increases significantly and the problem of maintaining a geometric intuition of the underlying properties becomes more complex. Moreover, we note that in the case of a four non-zero coefficient low-pass filter, we cannot have only non-negative coefficients.
\end{example}
\section{Acknowledgment}  This work was partially supported by NSF with award NSF-DMS 1720487 and NSF-DMS 1320910.

\bibliographystyle{unsrt}
\bibliography{bibsync}

\begin{thebibliography}{10}

\bibitem{Daubechies_1992_4532}
I.~Daubechies.
\newblock {\em Ten lectures on wavelets}.
\newblock Number~61 in CBMS. SIAM: Society for Industrial and Applied
  Mathematics, 1992.

\bibitem{Vetterli_1995_4637}
M.~Vetterli and J.~Kovacevic.
\newblock {\em Wavelets and subband coding}.
\newblock Prentice Hall PTR, Englewood Cliffs, NJ, 1995.

\bibitem{Kovacevic_1992_4518}
J.~Kovacevic and M.~Vetterli.
\newblock Nonseparable multidimensional perfect reconstruction filter-banks.
\newblock {\em IEEE Trans. Inf. Theory}, 38:533--555, 1992.

\bibitem{Ayache_2001_6383}
A.~Ayache.
\newblock Some methods for constructing nonseparable, orthonormal, compactly
  supported wavelet bases.
\newblock {\em Applied and Computational Harmonic Analysis}, 10:99--111, 2001.

\bibitem{Belogay_1999_58}
Eugene Belogay and Yang Wang.
\newblock Arbitrarily smooth orthogonal nonseparable wavelets in
  $\mathbb{R}^{2}$.
\newblock {\em SIAM J. Math. Anal.}, 30(3):678--697, 1999.

\bibitem{Candes2006}
Emmanuel Candes, Laurent Demanet, David Donoho, and Lexing Ying.
\newblock Fast discrete curvelet transforms.
\newblock {\em Multiscale Model. Simul.}, 5(3):861--899, 2006.

\bibitem{Candes2005}
Emmanuel~J Candes and Laurent Demanet.
\newblock The curvelet representation of wave propagators is optimally sparse.
\newblock {\em Comm. Pure Appl. Math.}, 58(11):1472--1528, 2005.

\bibitem{Demanet_2003_4534}
L.~Demanet and P.~Vandergheynst.
\newblock Gabor wavelets on the sphere.
\newblock In {\em Proc. SPIE Int. Soc. Opt. Eng.}, volume 5207, pages 5207 --
  5207 -- 8, 2003.

\bibitem{Candes_1999_6387}
E.~J. Candes.
\newblock Harmonic analysis of neural netwoks.
\newblock {\em Appl. Comput. Harmon. Anal}, 6:197--218, 1999.

\bibitem{Candes_2004_12186}
Emmanuel~J. Cand{\`e}s and David~L. Donoho.
\newblock New tight frames of curvelets and optimal representations of objects
  with piecewise {$C\sp 2$} singularities.
\newblock {\em Comm. Pure Appl. Math.}, 57(2):219--266, 2004.

\bibitem{Labate2005}
D.~Labate, W.~Lim, G.~Kutyniok, and G.~Weiss.
\newblock Sparse multidimensional representation using shearlets.
\newblock {\em SPIE Proc. 5914, SPIE, Bellingham}, pages 254--262, 2005.

\bibitem{Ron_1997_4616}
A.~Ron and Z.~Shen.
\newblock Affine system in $\mathcal{L}^{2}(\mathbb{R}^{d})$: {T}he analysis of
  the analysis operator.
\newblock {\em J. Funct. Anal.}, (148):408--447, 1997.

\bibitem{Ron_1997_4615}
A.~Ron and Z.~Shen.
\newblock Affine systems in ${{L_2}({\mathbb R}^d)}$ {I}{I}: {D}ual systems.
\newblock {\em J. Fourier Anal. Appl.}, 3:617--637, 1997.

\bibitem{Grohs2014}
Philipp Grohs and Gitta Kutyniok.
\newblock Parabolic molecules.
\newblock {\em Found. Comput. Math.}, 14(2):299--337, Apr 2014.

\bibitem{DAUBECHIES20031}
Ingrid Daubechies, Bin Han, Amos Ron, and Zuowei Shen.
\newblock Framelets: Mra-based constructions of wavelet frames.
\newblock {\em Appl. Comput. Harmon. Anal.}, 14(1):1 -- 46, 2003.

\bibitem{Han2012}
Bin Han.
\newblock Nonhomogeneous wavelet systems in high dimensions.
\newblock {\em Appl. Comput. Harmon. Anal.}, 32(2):169--196, 2012.

\bibitem{CHUI2002224}
Charles~K. Chui, Wenjie He, and Joachim St{\"o}ckler.
\newblock Compactly supported tight and sibling frames with maximum vanishing
  moments.
\newblock {\em Appl. Comput. Harmon. Anal.}, 13(3):224 -- 262, 2002.

\bibitem{ATREAS201451}
Nikolaos Atreas, Antonios Melas, and Theodoros Stavropoulos.
\newblock Affine dual frames and extension principles.
\newblock {\em Appl. Comput. Harmon. Anal.}, 36(1):51 -- 62, 2014.

\bibitem{Atreas2016}
Nikolaos~D. Atreas, Manos Papadakis, and Theodoros Stavropoulos.
\newblock Extension principles for dual multiwavelet frames of
  $\mathcal{L}^{2}(\mathbb{R}^{s})$ constructed from multirefinable generators.
\newblock {\em J. Fourier Anal. Appl.}, pages 1--24, 2016.

\bibitem{Kutyniok_2009_12187}
Gitta Kutyniok and Demetrio Labate.
\newblock Resolution of the wavefront set using continuous shearlets.
\newblock {\em Trans. Amer. Math. Soc.}, 361(5):2719--2754, 2009.

\bibitem{Kittipoom2012}
Pisamai Kittipoom, Gitta Kutyniok, and Wang-Q. Lim.
\newblock Construction of compactly supported shearlet frames.
\newblock {\em Constr. Approx.}, 35(1):21--72, Feb 2012.

\bibitem{Ayache_1997_29}
A.~Ayache.
\newblock Construction de bases orthonorm\'es d'ondelettes de
  $\mathbb{L}^{2}(\mathbb{R}^{2})$ non s\'eparables, à support compact et de
  r\'egularit\'e arbitrairement grande.
\newblock {\em Comptes Rendus Acad\'emie des Sciences de Paris}, 325:17--20,
  1997.

\bibitem{ehler05}
Martin Ehler.
\newblock Compactly supported multivariate wavelet frames obtained by
  convolution.
\newblock 2005.

\bibitem{SANANTOLIN2013201}
A.~San Antol\'in and R.A. Zalik.
\newblock A family of nonseparable scaling functions and compactly supported
  tight framelets.
\newblock {\em Journal of Mathematical Analysis and Applications}, 404(2):201
  -- 211, 2013.

\bibitem{HAN200343}
Bin Han.
\newblock Compactly supported tight wavelet frames and orthonormal wavelets of
  exponential decay with a general dilation matrix.
\newblock {\em Journal of Computational and Applied Mathematics}, 155(1):43 --
  67, 2003.
\newblock Approximation Theory, Wavelets, and Numerical Analysis.

\bibitem{Han_Jiang_Shen_Zhuang}
Bin Han, Qingtang Jiang, Zuowei Shen, and Xiaosheng Zhuang.
\newblock Symmetric canonical quincunx tight framelets with high vanishing
  moments and smoothness.
\newblock {\em Math. Comput.}, 87:347--379, 2018.

\bibitem{HAN1997380}
Bin Han.
\newblock On dual wavelet tight frames.
\newblock {\em Applied and Computational Harmonic Analysis}, 4(4):380 -- 413,
  1997.

\bibitem{kingsbury_1999_394}
N.~Kingsbury.
\newblock Image processing with complex wavelets.
\newblock {\em Phil. Trans. R. Soc. London A}, 357:2543--2560, 1999.

\bibitem{Selesnick_2000_4620}
I.W. Selesnick and L.~Sendur.
\newblock Iterated oversampled filter banks and wavelet frames.
\newblock In M.~Unser A.~Aldroubi, A.~Laine, editor, {\em Proc. Wavelet
  Applications in Signal and Image Processing VIII}, volume 4119 of {\em
  Proceedings of SPIE}, 2000.

\bibitem{Han_Zhao}
B.~Han and Z.~Zhao.
\newblock Tensor product complex tight framelets with increasing
  directionality.
\newblock {\em SIAM Journal on Imaging Sciences}, 7(2):997--1034, 2014.

\bibitem{Han_Mo_Zhao}
B.~Han, Q.~Mo, and Z.~Zhao.
\newblock Compactly supported tensor product complex tight framelets with
  directionality.
\newblock {\em SIAM Journal on Mathematical Analysis}, 47(3):2464--2494, 2015.

\bibitem{Cabrelli_2000_4507}
C.~A. Cabrelli and M-L. Gordillo.
\newblock Existence of multiwavelets in $\mathbb{R}^{n}$.
\newblock {\em Proc. Amer. Math. Soc.}, 130(5):1413--1424, 2000.

\bibitem{Adelson_1987_3}
E.~H. Adelson, E.~Simoncelli, and R.~Hingoranp.
\newblock Orthogonal pyramid transforms for image coding.
\newblock {\em Visual Communications and Image Processing II}, 845:50--58,
  1987.

\bibitem{Simoncelli_1995_606}
E.P. Simoncelli and W.T. Freeman.
\newblock The steerable pyramid: {A} flexible architecture for multi-scale
  derivative computation.
\newblock {\em Proc. IEEE International Conference on Image Processing}, 1995.

\bibitem{Candes_1999_4514}
E.J. Candes and D.L. Donoho.
\newblock Ridgelets: {A} key to higher dimensional intermittency?
\newblock {\em Phil. Trans. R. Soc. London}, A:2495--2509, 1999.

\bibitem{Guo_2007_12169}
Kanghui Guo and Demetrio Labate.
\newblock Optimally sparse multidimensional representation using shearlets.
\newblock {\em SIAM J. Math. Anal.}, 39:298--318, 2007.

\bibitem{Papadakis_2009_4602}
M.~Papadakis, B.G. Bodmann, S.K. Alexander, D.~Vela, S.~Baid, A.A. Gittens,
  D.J. Kouri, S.D. Gertz, S.~Jain, J.R. Romero, X.~Li, P.~Cherukuri, D.D. Cody,
  G.W. Gladish, Aboshady., J.L. Conyers, and S.W. Casscells.
\newblock Texture-based tissue characterization for high-resolution {CT}-scans
  of coronary arteries.
\newblock {\em Commun. Numer. Methods Eng.}, 25(6):597--613, 2009.

\bibitem{BinHan2017}
Bin Han, Tao Li, and Xiaosheng Zhuang.
\newblock Directional compactly supported box spline tight framelets with
  simple geometric structure.
\newblock {\em Applied Mathematics Letters}, 91:213 -- 219, 2019.

\bibitem{MinhDo05}
Yue Lu and M.~N. Do.
\newblock The finer directional wavelet transform.
\newblock In {\em Proc. IEEE Int. Conf. Acoust. Speech Signal Process.},
  volume~4, pages iv/573--iv/576, March 2005.

\bibitem{MinhDo3}
Y.~M. Lu and M.~N. Do.
\newblock Multidimensional directional filter banks and surfacelets.
\newblock {\em IEEE Trans. Image Process.}, 16(4):918--931, April 2007.

\bibitem{MinhDo07}
A.~L. da~Cunha and M.~N. Do.
\newblock On two-channel filter banks with directional vanishing moments.
\newblock {\em IEEE Trans. Image Process.}, 16(5):1207--1219, May 2007.

\bibitem{MinhDo4}
A.~L. da~Cunha and M.~N. Do.
\newblock Bi-orthogonal filter banks with directional vanishing moments.
\newblock In {\em Proc. IEEE Int. Conf. Acoust. Speech Signal Process.},
  volume~4, pages iv/553--iv/556, March 2005.

\bibitem{Han_Diao}
Chenzhe Diao and Bin Han.
\newblock Quasi-tight framelets with high vanishing moments derived from
  arbitrary refinable functions.
\newblock {\em Applied and Computational Harmonic Analysis}, 2018.

\bibitem{Hernandez_1996_4563}
E.~Hernandez and G.~Weiss.
\newblock {\em A first course on wavelets}.
\newblock CRC Press, Boca Raton, FL, 1996.

\bibitem{HAST201465}
Anders Hast.
\newblock Simple filter design for first and second order derivatives by a
  double filtering approach.
\newblock {\em Pattern Recognit. Lett.}, 42:65 -- 71, 2014.

\bibitem{GOH2008}
Say~Song Goh and K.~M. Teo.
\newblock Extension principles for tight wavelet frames of periodic functions.
\newblock {\em Appl. Comput. Harmon. Anal.}, 25(2):168 -- 186, 2008.

\bibitem{karSpie}
Nikolaos Atreas, Nikolaos Karantzas, Manos Papadakis, and Theodoros
  Stavropoulos.
\newblock Exploring neuronal synapses with directional and symmetric frame
  filters with small support.
\newblock In {\em Proc.SPIE}, volume 10394, pages 10394 -- 10394 -- 18, 2017.

\bibitem{Daub1}
Ingrid Daubechies.
\newblock Orthonormal bases of compactly supported wavelets.
\newblock {\em Comm. Pure Appl. Math.}, 41(7):909--996, 1988.

\end{thebibliography}
\end{document}